\newcommand{\beqeeq}[1]{
\begin{equation}
#1
\end{equation}
}
\title{The rate of purification of quantum trajectories}
\author[1,4]{Ma\"{e}l Bompais\thanks{Corresponding author: mael.bompais@nottingham.ac.uk}}
\author[2]{Nina H. Amini}
\author[3,4]{Juan P. Garrahan} 
\author[1,4]{M\u{a}d\u{a}lin Gu\c{t}\u{a}}
\affil[1]{School of Mathematical Sciences, University of Nottingham, University Park, Nottingham, NG7 2RD, United Kingdom}
\affil[2]{Université Paris-Saclay, CNRS, CentraleSupélec, Laboratoire des signaux et systèmes (L2S), France}
\affil[3]{School of Physics and Astronomy, University of Nottingham, University Park, Nottingham, NG7 2RD, United Kingdom}
\affil[4]{Centre for the Mathematics and Theoretical Physics of Quantum Non-equilibrium Systems, University of Nottingham, University Park, Nottingham, NG7 2RD, United Kingdom}
\date{}
\begin{document}

\maketitle


\begin{abstract}
{We investigate the behavior of quantum trajectories conditioned on measurement outcomes. Under a condition related to the absence of so-called dark subspaces,
K\"{u}mmerer and Maassen showed that such trajectories almost surely purify in the long
run. An alternative, simple proof of this result is first provided using Lyapunov methods. The result is then strengthened by showing that purification occurs at an exponential rate in expectation, again via a Lyapunov approach. The quantum filter stability 
problem is then considered: two trajectories are propagated under the same measurement
record -- one from the true initial state, the other from an arbitrary initial
guess -- and it is shown that the estimated trajectory converges exponentially fast to the
true one. This quantifies the rate at which information about the current state is
progressively revealed through the measurement process.}
\end{abstract}

\section{Introduction}
Quantum trajectories are random processes that describe the evolution of quantum systems subject to repeated indirect measurements \cite{wiseman2009quantum,gardiner2004quantum,carmichael1993quantum}. They are widely used in quantum optics and open systems theory \cite{carmichael1993quantum,gardiner2004quantum}, quantum control and filtering \cite{belavkin1992quantum,belavkin1983on-the-theory,belavkin1987non-demolition,bouten2009a-discrete,van-handel2005modelling}, and quantum information theory \cite{jacobs2014quantum,nielsen2000quantum,nielsen2001quantum}, where they model open quantum systems
under continuous monitoring. Across all these settings, the quantum trajectory formalism reveals how measurement backaction intertwines with the system’s own dynamics, and underlies many results in state estimation, feedback control, and decoherence. Understanding the long-term behavior and stability of such trajectories is therefore fundamental for both theoretical analysis and practical implementations of quantum technologies.

In an indirect measurement scheme, a system of interest interacts sequentially with a series of ancillary quantum systems, called probes. After each interaction, a direct measurement is performed on the probe. Due to the entanglement between system and probe, the measurement outcome provides partial information about the system, while also inducing a backaction on its state. Repeating this procedure with independent probes generates a stochastic sequence of states for the system—known as a quantum trajectory.

A natural question is how informative the measurement process is about the system's state. In \cite{maassen2006purification}, Maassen and Kümmerer showed that, for perfect measurements and under a structural condition—namely the absence of dark subspaces—the quantum trajectory purifies over time, in the sense that, starting from a potentially mixed state, it asymptotically reaches the set of pure states (while possibly continuing to evolve on this set).

Although purification is a property of the physical trajectory itself, it also has direct consequences for state estimation. From an estimation perspective, this purification behavior is closely related to the ability to reconstruct the current state of the trajectory from the measurement record when the initial state is unknown. Just as a direct measurement of a non-degenerate observable on an arbitrary quantum system determines the post-measurement state from the measurement outcome alone, a purifying quantum trajectory allows one to reconstruct the system state asymptotically from the accumulated measurement record \cite{amini2021on-asymptotic, benoist2019invariant}.
This notion is usually referred to as the stability of estimated quantum trajectories (or quantum filters). When the true system state and an estimated state are driven by the same measurement record, stability means that the dependence on the initial estimate is asymptotically lost. 

{In the case of perfect measurements,  on which we focus in this article, purification has been shown to be necessary and sufficient for the asymptotic stability of estimated quantum trajectories \cite{bompais2024asymptotic,amini2021on-asymptotic}. For imperfect measurements, trajectories typically do not purify — the set of pure states is generally not invariant under the dynamics — yet some filter stability results still exist: when the underlying quantum channel is irreducible, an appropriate contractivity condition yields necessary and sufficient condition for stability \cite{amini2025asymptotic}. These stability properties have also been used to establish the uniqueness of the invariant measure for both perfect and imperfect measurements \cite{benoist2019invariant,amini2025asymptotic}, and the set of invariant measures in the presence of dark subspaces was characterized in \cite{benoist2024dark}.}

While asymptotic purification and state-estimation results provide essential qualitative insight, they leave open quantitative questions regarding the speed of convergence. Understanding how fast purification and estimation occur is however crucial. At a conceptual level, convergence rates capture the efficiency of information transfer from the system to the detector, governing the reduction of uncertainty induced by measurement. From an applied perspective, these rates set fundamental limits on the performance of quantum filtering methods, the responsiveness of feedback control, and the feasibility of real-time state monitoring. Addressing these questions is therefore essential for both theoretical understanding and practical implementation.

In this paper, we address these quantitative questions using Lyapunov techniques. We first present a simple alternative proof of the Kümmerer–Maassen purification theorem based on the construction of a suitable Lyapunov function. Building on this approach, we then establish exponential purification in expectation, with an explicit convergence rate. Leveraging exponential purification, we further show that the fidelity between the true and estimated trajectories also converges exponentially fast in expectation, when both are driven by the same measurement record. This corresponds to exponential stability of the associated quantum filters. We note that a notion of exponential purification has also been investigated in \cite{benoist2024dark}. While related in spirit, our results rely on a different approach based on Lyapunov methods, leading to an explicit exponential rate.

The paper is organized as follows. Section \ref{sec:int} introduces the model of quantum trajectories together with the purification assumption, and gives an alternative proof of the K\"{u}mmerer–Maassen theorem. In Section \ref{sec:purity}, we prove exponential convergence in expectation for the purity. Section \ref{sec:stability} uses this result to derive exponential convergence of the fidelity between the true and estimated trajectories, strengthening previous results on filter stability. In Section \ref{sec:simulations}, we conduct a simulation study illustrating exponential purification in a spin chain model with measurements at one end. Section \ref{sec:conc} provides our conclusions.

\section{Quantum trajectories and  purification}\label{sec:int}
In this section we briefly present the setting of discrete-time quantum trajectories, introduce the purification assumption (i.e., the absence of dark subspaces), and  provide an alternative, simpler proof of the K\"{u}mmerer–Maassen purification result \cite{maassen2006purification}.

\subsection{Dynamics under repeated measurements}

Let us consider a quantum system described by a Hilbert space $\H$ of finite dimension $d.$ The state space of the system is the set of density operators
\beqeeq{
	\S = \set{ \rho \in \mathcal \B(\H) \mid \rho = \rho^\dag,~\rho \geq 0,~\tr{\rho}=1}.
}
Let us suppose that the system is subjected to indirect repeated measurements described in terms of Kraus operators $\{V_i\}_{i\in \I}$ on $\mathcal{H},$ labeled by a finite set of possible outcomes $\I$. The Kraus operators satisfy the  completeness relation \beqeeq{\sum_{i\in\I} V_i^\dag V_i = \mathbb I,} where $\mathbb I$ denotes the identity operator on $\H$. The evolution of the system's state is then described by a quantum trajectory, which is a Markov chain $(\rho_n)$ on the state space $\S$, with transitions between states given by 
\beqeeq{\label{eq:perfect_traj}
	\rho_{n+1} = \normalized{V_{i} \rho_n V_{i}^\dag} \qquad \text{with probability } \tr{V_{i} \rho_n V_{i}^\dag},\quad i \in \I.
}

By iterating the one-step update, if the observed measurement sequence up to time $n$ is $I=(i_1,\ldots,i_n)$, the conditional state of the system at time $n$, starting from $\rho_0$, is
\beqeeq{
    \rho_n = \normalized{V_{i_n} \cdots V_{i_1} \rho_0 V_{i_1}^\dag \cdots V_{i_n}^\dag} = \normalized{V_I \rho_0 V_I^\dag}{},
}
where we set the notation $V_I := V_{i_n} \cdots V_{i_1}$, corresponding to the product of the Kraus operators associated with the sequence $I$. The probability of observing this sequence is given by
\beqeeq{
    \mathbb P_{\rho_0}(I) = \tr{V_I \rho_0 V_I^\dag} = \tr{M_I \rho_0},
}
where $M_I := V_I^\dag V_I$. The operators $(M_I)_I$ define the POVM associated with sequences of measurement outcomes, satisfying $M_I \geq 0$ and $\sum_{I \in \I^n} M_I = \mathbb I$ for all $n\in \N.$ We denote by $\E_{\rho_0}$ the expectation with respect to $\P_{\rho_0}$. When there is no ambiguity, we drop the subscript and write $\P$ and $\E$ for $\P_{\rho_0}$ and $\E_{\rho_0}$ respectively. {The dynamics \eqref{eq:perfect_traj} corresponds to the so-called perfect measurement setting. More general state transformations arise in the imperfect measurement setting \cite{somaraju2012design}, for instance when measurement outcomes are coarse-grained. Trajectories typically do not purify in that case; hence, in this paper, we focus on the perfect measurement setting.}

\subsection{Purification assumption}\label{subsec:purification_assumption}

The \emph{purity} of a state $\rho$, which will be the quantity of interest in
this work, is defined by
\beqeeq{
    \mathrm{Purity} = \tr{\rho^{2}}.
}
It satisfies
${1}/{d} \;\le\; \tr{\rho^{2}} \;\le\; 1,$
with the lower bound attained by the maximally mixed state $\rho = \mathbb I/d$,
and the upper bound by pure states (rank-one projections). We say that a quantum trajectory \emph{almost surely purifies} when:
\beqeeq{  
    \lim_{n\to\infty} \tr{\rho_n^{2}} = 1 \quad \text{a.s.}
}
According to K\"{u}mmerer and Maassen \cite{maassen2006purification}, this occurs precisely when the collection of Kraus operators admits
no \emph{dark subspaces}. The precise statement of the assumption is as follows. We refer to it as the purification assumption.\\

\noindent\textbf{(Pur):~}
    For all projections $\pi$ with $\rank (\pi) \geq 2$, there exist $p\in \N$ and $(i_1,\ldots,i_p) \in \I^p$ such that
    $$\pi V_{i_1}^\dag \cdots V_{i_p}^\dag V_{i_p} \cdots V_{i_1}  \pi \not\propto \pi.$$

Here and throughout, $A \propto \pi$ means that $A = \lambda \pi$ for some
scalar $\lambda \in \C$.

When \pur does not hold, i.e., when there exists a projection $\pi$ with
$\rank(\pi) \ge 2$ such that
\beqeeq{\label{eq:prop_relation_dark}
    \pi V_{i_1}^\dag \cdots V_{i_p}^\dag V_{i_p} \cdots V_{i_1} \pi \propto \pi
}
for all $p \in \mathbb N$ and all $(i_1,\ldots,i_p)\in\I^p$,
the subspace $\range(\pi)$ is called a dark subspace.
A quantum trajectory initially supported on such a subspace has a constant purity, and, in particular, never purifies if the initial state is mixed. Ruling out the existence of such subspaces essentially amounts to requiring that the products of Kraus operators never act as an isometry on any non-trivial subspace of the Hilbert space.

The following theorem by K\"ummerer and Maassen~\cite{maassen2006purification}
states that hypothesis \pur is a sufficient condition for the almost sure purification of
quantum trajectories.

\begin{theorem}[K\"ummerer and Maassen, 2006]\label{thm:KM}
    Suppose that \pur holds. Then the quantum trajectory almost surely purifies.
\end{theorem}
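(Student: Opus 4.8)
The plan is to build an explicit Lyapunov (super-martingale-type) argument on the purity. Since the paper announces a Lyapunov-based proof, the natural candidate is a function that is monotone under the dynamics only in the right direction: rather than $\tr{\rho^2}$ directly, I would work with a quantity like $V(\rho) = 1 - \tr{\rho^2}$ (the linear entropy), or a suitable power/variant thereof, and show that $\mathbb{E}[V(\rho_{n+1}) \mid \rho_n] \le V(\rho_n)$, i.e. that the expected purity is a bounded submartingale. The first computational step is therefore to expand the one-step expected purity: writing $\rho_{n+1}^{(i)} = V_i\rho_n V_i^\dag / p_i$ with $p_i = \tr{V_i\rho_n V_i^\dag}$, one has $\mathbb{E}[\tr{\rho_{n+1}^2} \mid \rho_n] = \sum_{i\in\O} p_i \tr{(\rho_{n+1}^{(i)})^2} = \sum_{i\in\O} \tr{(V_i\rho_n V_i^\dag)^2}/p_i$. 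By Jensen/Cauchy–Schwarz (the map $\rho \mapsto \tr{\rho^2}$ is convex, and the post-measurement ensemble averages back to $\sum_i V_i\rho_n V_i^\dag$ which is generally \emph{not} $\rho_n$ — here one must instead use that purity is a convex function and the conditional expectation structure), I expect to show this one-step expectation is at least $\tr{\rho_n^2}$, so that $M_n := \tr{\rho_n^2}$ is a bounded submartingale and hence converges almost surely to some limit random variable $M_\infty \in [1/d, 1]$.

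The second step is to identify the limit. Almost sure convergence of $M_n$ forces $\mathbb{E}[M_{n+1} - M_n \mid \rho_n] \to 0$ along the trajectory, which means the trajectory must spend almost all its time in the region where the one-step purity increment is (nearly) zero. The key structural input is then a quantitative version of the statement: the increment $\mathbb{E}[\tr{\rho_{n+1}^2}\mid \rho_n] - \tr{\rho_n^2}$ vanishes only when $\rho_n$ is such that every $V_i$ acts on $\mathrm{range}(\rho_n)$ proportionally to a partial isometry preserving the spectral structure — more precisely, when $\rho_n$ is either pure or its support behaves like a dark subspace under the one-step Kraus operators. I would make this precise by computing the increment exactly and showing it is a sum of non-negative terms, each measuring the failure of some $V_i$ to be "proportional to an isometry times $\rho_n$'s structure"; the increment is zero iff all these obstructions vanish. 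Iterating over the word $(i_1,\ldots,i_p)$ and invoking Assumption \textbf{Pur} then rules out any non-pure limit: if $\rho_\infty$ were mixed with rank $\ge 2$, the projection $\pi$ onto (a two-dimensional subspace of) its support would satisfy the dark-subspace relation \eqref{eq:prop_relation_dark}, contradicting \textbf{Pur}.

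The cleanest way to run the final step rigorously is to pass to the limit using the Markov structure: consider a (sub)sequential limit in distribution $\rho_\infty$, whose law is invariant under the dynamics in the appropriate sense, and on which the purity is almost surely constant (equal to $M_\infty$); constancy of purity along the limiting process, via the exact increment formula, forces the dark-subspace identity to hold for the support projection of $\rho_\infty$, for all words, which \textbf{Pur} forbids unless $\mathrm{rank}(\rho_\infty) = 1$. Hence $M_\infty = 1$ almost surely, which is precisely almost sure purification. An alternative, perhaps more self-contained route avoiding invariant-measure machinery: use a compactness argument directly on $\S$ together with a uniform lower bound of the form "if $\tr{\rho^2} \le 1 - \varepsilon$ then there is a word $(i_1,\dots,i_p)$ of bounded length, with probability bounded below, after which the purity increases by at least $\delta(\varepsilon) > 0$" — such a bound follows from \textbf{Pur} by continuity/compactness — and then a standard Borel–Cantelli-type argument shows the purity cannot stay bounded away from $1$.

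The main obstacle I anticipate is the second step: turning Assumption \textbf{Pur} — a statement about all projections of rank $\ge 2$ and all finite words — into the right quantitative dichotomy for the one-step (or finite-word) purity increment. The one-step increment only "sees" the single-step Kraus operators, whereas \textbf{Pur} involves arbitrarily long products, so some iteration or limiting argument is unavoidable, and one must be careful that the "increment is zero" set is characterized exactly (not just sufficiently) by a dark-subspace-type condition. Handling the normalization denominators $p_i$ (which can be small) and ensuring the relevant estimates are uniform over the compact set $\S$ is the other delicate point; a compactness/continuity argument on $\S$ should take care of it, but it needs to be set up so that the "bad" set $\{\tr{\rho^2} \le 1-\varepsilon\}$ is genuinely uniformly repelling.
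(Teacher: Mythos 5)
Your overall architecture (a bounded monotone functional of the purity that converges almost surely, followed by exclusion of mixed accumulation points via \textbf{Pur}) matches the paper's, but the two steps that carry all the weight are left open, and both are exactly where the paper does real work. First, the martingale-type inequality itself: as you note, Jensen does not apply because the post-measurement ensemble averages to $\sum_i V_i\rho V_i^\dag\neq\rho$, and you never supply a substitute; the needed fact $\sum_i \mathrm{Tr}[(V_i\rho V_i^\dag)^2]/\mathrm{Tr}[V_i\rho V_i^\dag]\ \ge\ \mathrm{Tr}[\rho^2]$ is precisely the Nielsen inequality underlying the original K\"ummerer--Maassen proof, which this paper is designed to avoid. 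The paper's way out is the specific choice $\mathcal V(\rho)=\sqrt{1-\mathrm{Tr}[\rho^2]}$: for that function the $p$-step drift is bounded using only Cauchy--Schwarz plus the triangle inequality for the Frobenius norm, since $\sqrt{\mathrm{Tr}[(V_I\rho V_I^\dag)^2]}=\|\rho^{1/2}V_I^\dag V_I\rho^{1/2}\|$ and $\sum_{I}\rho^{1/2}V_I^\dag V_I\rho^{1/2}=\rho$. (The same two ingredients would in fact also yield your submartingale claim for the purity, but that computation has to be written out; "convexity plus the conditional expectation structure" is not an argument here.)

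Second, the identification of the zero-drift set. You correctly flag that the one-step increment only sees single Kraus operators while \textbf{Pur} involves arbitrary words, and you propose to ``iterate'', but iterating the one-step condition along trajectories issued from an accumulation point is exactly the delicate point, and you leave it unresolved. The paper avoids any iteration: it proves the supermartingale inequality for every block length $p$ simultaneously, with an explicit nonpositive drift bound $\mathcal Q_p$, and invokes the Kushner-type invariance result of Theorem~\ref{thm:accumulation_points_are_zeros_of_the_increment} (adapted to $p$-step increments) to conclude that accumulation points lie in $\bigcap_{p}\{\mathcal Q_p=0\}$; strict convexity of the Frobenius norm then turns ``equality in the triangle inequality'' into the exact characterization $\rho^{1/2}V_I^\dag V_I\rho^{1/2}\propto\rho$ for all words $I$, i.e.\ $\pi V_I^\dag V_I\pi\propto\pi$ on the support, which \textbf{Pur} forbids when $\mathrm{rank}(\rho)\ge 2$. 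This ``if and only if'' step, which you acknowledge needing but do not prove, is what closes the argument. Note also that pure accumulation points alone do not give almost sure convergence of the purity; the paper gets the limit because $\mathcal V(\rho_n)$ is a nonnegative supermartingale, hence converges a.s., and its limit must agree with the value $0$ taken at the accumulation points. Your alternative compactness/Borel--Cantelli route is viable (and closer in spirit to the paper's exponential-rate argument in Section~\ref{sec:purity}), but it additionally requires the finite-length reduction of \textbf{Pur} to words of length at most $d^2$ (Proposition~\ref{prop:verif_ND_finite_length}) to obtain uniformity over the compact set $\{\mathrm{Tr}[\rho^2]\le 1-\varepsilon\}$ --- another ingredient you assume rather than establish.
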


The next proposition formalizes that, when a dark subspace exists, the purity indeed remains constant along trajectories whose initial state is supported on it. Thus, \pur is also a necessary condition for almost sure purification to hold for all initial states.

\begin{proposition}[Constancy of the purity on a dark subspace]\label{prop:purity_constant_in_dark_subspace}
Let $D \subset \H$ be a dark subspace with projection $\pi_D$, in the sense that
for every finite record $I=(i_1,\dots,i_p)$,
\beqeeq{
    \pi_D\,V_I^\dagger V_I\,\pi_D = \mu_I\,\pi_D,
}
for some scalar $\mu_I.$
Assume that the initial state $\rho_0$ is entirely supported on $D$.
Then the purity is conserved: for all $n\ge 0$,
\beqeeq{
    \tr{\rho_n^2} = \tr{\rho_0^2}.
}
\end{proposition}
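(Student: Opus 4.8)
The plan is to show that, when the initial state $\rho_0$ is supported on the dark subspace $D$, each step of the trajectory acts on $\rho_0$ by conjugation with a scalar multiple of a partial isometry whose initial space is exactly $D$, and that such conjugations leave the purity unchanged once the renormalization is accounted for.

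First, fix $n\ge 1$ and a measurement record $I=(i_1,\dots,i_n)$, and write $V_I:=V_{i_n}\cdots V_{i_1}$ (the case $n=0$ being trivial). Since $\rho_0$ is supported on $D$ we have $\pi_D\rho_0=\rho_0\pi_D=\rho_0$, so I would first rewrite the non-normalized conditional state as
\[
    V_I\,\rho_0\,V_I^\dagger=(V_I\pi_D)\,\rho_0\,(V_I\pi_D)^\dagger .
\]
The key observation is then that the dark-subspace relation $\pi_D V_I^\dagger V_I\pi_D=\mu_I\pi_D$ with $\mu_I>0$ says precisely that the operator $U_I:=\mu_I^{-1/2}\,V_I\pi_D$ satisfies $U_I^\dagger U_I=\pi_D$, i.e.\ $U_I$ is a partial isometry with initial space $D$.

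Next I would deal with the normalization. One computes $\tr{V_I\rho_0 V_I^\dagger}=\mu_I\,\tr{U_I^\dagger U_I\rho_0}=\mu_I\,\tr{\pi_D\rho_0}=\mu_I>0$, so the record $I$ has positive probability and $\rho_n$ is well defined, with
\[
    \rho_n=\frac{V_I\rho_0 V_I^\dagger}{\tr{V_I\rho_0 V_I^\dagger}}=U_I\rho_0 U_I^\dagger .
\]
Finally, using $U_I^\dagger U_I=\pi_D$ together with $\pi_D\rho_0=\rho_0$,
\[
    \rho_n^2=U_I\rho_0\,(U_I^\dagger U_I)\,\rho_0\,U_I^\dagger=U_I\,\rho_0^2\,U_I^\dagger ,
\]
whence $\tr{\rho_n^2}=\tr{U_I^\dagger U_I\rho_0^2}=\tr{\pi_D\rho_0^2}=\tr{\rho_0^2}$. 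Since $I$ is an arbitrary record of length $n$, this holds along every trajectory, which gives the claim.

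I do not expect a genuine difficulty here. The only points that need care are (i) recognizing that the dark-subspace identity is exactly what turns $V_I\pi_D$ into a rescaled partial isometry with initial space $D$, and (ii) checking that the factors $\mu_I$ cancel between numerator and denominator, which also relies on $\mu_I>0$ so that the conditioning is meaningful. A one-step inductive variant—conjugating by $V_i\pi_D$ and tracking that $D$ is mapped into $\range(V_i\pi_D)$ at each step—would work equally well, but it packages the same bookkeeping in a less transparent way.
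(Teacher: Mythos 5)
Your proof is correct and follows essentially the same route as the paper: factor out the scalar $\mu_I$, observe that the dark-subspace identity makes $\mu_I^{-1/2}V_I$ act isometrically on $D$, and use invariance of the purity under this (partial) isometric conjugation together with the cancellation of $\mu_I$ in the normalization. If anything, your phrasing via the partial isometry $U_I=\mu_I^{-1/2}V_I\pi_D$ with $U_I^\dagger U_I=\pi_D$ is slightly more careful than the paper's statement of a ``unitary $U_I:D\to D$'', since $V_I$ need not map $D$ back into itself.
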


\begin{proof} Since $\rho_0$ is supported on $D$, we have \beqeeq{ \tr{V_I\rho_0 V_I^\dagger} = \tr{V_I^\dagger V_I\rho_0} = \tr{\mu_I\pi_D\rho_0} = \mu_I. } On $D$, the relation $V_I^\dagger V_I = \mu_I \pi_D$ implies the existence of a unitary operator $U_I$ such that \beqeeq{ V_I\big|_D = \sqrt{\mu_I}\,U_I\big|_D. } Therefore, the normalized post-measurement state reads \beqeeq{ \rho_n = \frac{V_I\rho_0 V_I^\dagger}{\tr{V_I\rho_0 V_I^\dagger}} = U_I\rho_0 U_I^\dagger. } Since the purity is invariant under unitary conjugation, it follows that \beqeeq{ \tr{\rho_n^2} = \tr{\rho_0^2}. } \end{proof}

\medskip

Verifying hypothesis \pur is generally difficult in practice. 
The following proposition provides a practical criterion, showing that it suffices to check darkness up to a finite order. It will also be used in the proof of exponential purification in Section~\ref{sec:purity}.
Although this result is known within the
community \cite{privatecomm} (albeit not written explicitly), we include a proof for
completeness.

{
To this end, we introduce the \textit{observable space}
\beqeeq{
    \O
    =
    \spn
    \left\{
        V_{i_1}^\dag \cdots V_{i_p}^\dag V_{i_p} \cdots V_{i_1}
        \;\middle|\; (i_1,\dots,i_p)\in\I^p,\; p\in\N
    \right\}.
}
This space coincides with the linear span of the effects $(M_I)_I$ governing the output
probability law. In particular, the statistics of arbitrarily long measurement records
determine the quantities $\tr{X\rho_0}$ for all $X \in \O$. Hence, $\O$ captures
exactly the components of the initial state $\rho_0$ that can be identified from the statistics of the measurement process, while the orthogonal directions to $\O$ remain undetectable. 

We further consider the sequence of subspaces $(\O_p)_{p\in\N}$ of $\B(\H)$ defined by
\beqeeq{
    \O_p
    =
    \spn
    \left\{
        V_{i_1}^\dag \cdots V_{i_p}^\dag V_{i_p} \cdots V_{i_1}
        \;\middle|\; (i_1,\dots,i_p)\in\I^p
    \right\},
}
so that $\O = \bigcup_{p\in\N} \O_p$. Using the completeness relation
$\sum_{i\in\I} V_i^\dag V_i = \mathbb I$, we have $\O_p \subset \O_{p+1}$. Since each
$\O_p$ is a subspace of $\B(\H)$, whose dimension is $d^2$, the sequence
$(\dim \O_p)$ can increase at most $d^2$ times. Hence, there exists $p \le d^2$ such
that $\O_p = \O_{p+1};$ we denote by $\bar p$ the smallest such index. Using the iterative description
\beqeeq{
    \O_{p+1}
    =
    \spn
    \left\{
        V_i^\dag X V_i \;\middle|\; X\in \O_p,\ i\in\I
    \right\},
}
we obtain
\beqeeq{
    \O_{\bar p+2}
    =
    \spn
    \left\{
        V_i^\dag X V_i \;\middle|\; X\in \O_{\bar p+1},\ i\in\I
    \right\}
    =
    \spn
    \left\{
        V_i^\dag X V_i \;\middle|\; X\in \O_{\bar p},\ i\in\I
    \right\}
    =
    \O_{\bar p+1},
}
where we used $\O_{\bar p+1}=\O_{\bar p}.$ By induction,
$$\O_{\bar p} = \O_{\bar p+1} = \O_{\bar p+2} = \cdots,$$
and therefore 
$$\O_{\bar p} = \O.$$
This stabilization property is the key to reducing
the verification of \pur to finitely many steps, as made precise by the following proposition.

\begin{proposition}\label{prop:verif_ND_finite_length}
Hypothesis \pur holds if and only if for any projection $\pi$ with $\rank(\pi)\ge 2$
there exist a length $p \le \pbar$ and a word $(i_1,\ldots,i_p) \in\I^p$ such that
\beqeeq{
    \pi V_{i_1}^\dag \cdots V_{i_p}^\dag V_{i_p} \cdots V_{i_1} \pi \not\propto \pi.
}
\end{proposition}

\begin{proof}
For a projection $\pi \in \B(\H)$, define
$A_\pi = \{ X \in \B(\H) \mid \pi X \pi \propto \pi \}$.
By definition of \pur,
\beqeeq{
    \pur \;\Leftrightarrow\;
    \forall\ \pi,\ \rank(\pi)\ge 2,\ 
    \exists X \in \O \text{ such that } X \notin A_\pi.
}
Since $\O = \O_{\bar p}$, the claim follows.
\end{proof}
}

Hence, for a given subspace $\range(\pi)$ of dimension greater than one, if the proportionality relation \eqref{eq:prop_relation_dark} holds for all sequences of length up to $\bar p$, then it necessarily holds for sequences of length $p>\bar p$, and the subspace is indeed dark.
Conversely, if the subspace is not dark, there exists a sequence of length $p \le \pbar$ that violates the proportionality relation \eqref{eq:prop_relation_dark}. {We note, however, that $\pbar$ is a bound and that in some cases non-darkness may be detected at a length strictly smaller than $\pbar$ for all subspaces; see Section~\ref{sec:simulations} for an explicit example.}



\subsection{An alternative proof of Theorem \ref{thm:KM}}\label{subsec:alternative_proof_Kummerer_and_Maassen}

In this subsection, we provide a simple proof of the K\"{u}mmerer--Maassen purification result using Lyapunov techniques, while the original proof  \cite{maassen2006purification} is based on martingale methods and a fundamental inequality due to Nielsen \cite{nielsen2001characterizing}. Let us introduce the following candidate Lyapunov function:
\begin{equation}
    \label{eq:lyapunov.function}
    \mathcal V(\rho) = \sqrt{1 - \tr{\rho^2}}, \quad \rho \in \S.
\end{equation}
Note that $\mathcal V(\rho) \ge 0$ for all $\rho \in \S$, and $\mathcal V(\rho)=0$ if and only if $\rho$ is pure.
In addition, we have the following property.

\begin{proposition}\label{prop:V(rho)_supermartingale}
    The process $(\mathcal V(\rho_n))_{n\ge 0}$ is a supermartingale, that is,
for all $n\ge 0$ and all $p\in\mathbb N$,
\beqeeq{
    \EE{\mathcal V(\rho_{n+p})}{\rho_n} \le \mathcal V(\rho_n) \quad {\rm a.s.}
}
\end{proposition}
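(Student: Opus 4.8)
The plan is to reduce to the one-step inequality and prove it by an algebraic identity plus Cauchy--Schwarz. First, since $(\rho_n)$ is a Markov chain, $\EE{\mathcal V(\rho_{n+p})}{\rho_n}$ depends only on $\rho_n$, and by the tower property it suffices to establish the one-step bound $\EE{\mathcal V(\rho_{n+1})}{\rho_n}\le \mathcal V(\rho_n)$, the general $p$ following by induction. Fix $\rho\in\S$, and for $i\in\O$ set $\sigma_i=V_i\rho V_i^\dag$, $p_i=\tr{\sigma_i}$, and $\tilde\rho_i=\sigma_i/p_i$ when $p_i>0$. Then $\EE{\mathcal V(\rho_{n+1})}{\rho_n=\rho}=\sum_i p_i\sqrt{1-\tr{\tilde\rho_i^{\,2}}}=\sum_i\sqrt{p_i^{\,2}-\tr{\sigma_i^2}}$, where the terms with $p_i=0$ are dropped (and there $\sigma_i=0$, so $p_i^{\,2}-\tr{\sigma_i^2}=0$ as well). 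Writing $a_i:=p_i^{\,2}-\tr{\sigma_i^2}\ge 0$, the goal becomes $\sum_i\sqrt{a_i}\le\sqrt{1-\tr{\rho^2}}$. Both sides are nonnegative, so I would square; the difficulty is precisely that squaring $\sum_i\sqrt{a_i}$ produces cross terms $\sum_{i\neq j}\sqrt{a_ia_j}$, and a naive ``purity increases in expectation'' argument (which only uses concavity carelessly) does not control them.

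The key observation is that the off-diagonal part of the purity expansion absorbs exactly these cross terms. Using the completeness relation $\sum_i V_i^\dag V_i=\mathbb I$ one has the identity
\beqeeq{
	\tr{\rho^2}=\tr{\Big(\sum_i V_i^\dag V_i\,\rho\Big)\Big(\sum_j V_j^\dag V_j\,\rho\Big)}=\sum_i\tr{\sigma_i^2}+\sum_{i\neq j} b_{ij},\qquad b_{ij}:=\tr{V_i\rho V_j^\dag V_j\rho V_i^\dag},
}
together with $1=\big(\sum_i p_i\big)^2=\sum_i p_i^{\,2}+\sum_{i\neq j}p_ip_j$. Substituting these into the squared target inequality, a short rearrangement shows it is equivalent to
\beqeeq{
	\sum_{i\neq j}\big(\sqrt{a_ia_j}+b_{ij}\big)\le \sum_{i\neq j}p_ip_j,
}
and it suffices to prove this termwise: $\sqrt{a_ia_j}+b_{ij}\le p_ip_j$ for each ordered pair $i\neq j$.

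For the termwise bound I would write $A_i:=V_i\rho^{1/2}$, so that $b_{ij}=\tr{(A_i^\dag A_i)(A_j^\dag A_j)}$, $\tr{(A_i^\dag A_i)^2}=\tr{\sigma_i^2}$, and $\tr{A_i^\dag A_i}=p_i$. Since $A_i^\dag A_i$ and $A_j^\dag A_j$ are positive semidefinite, Cauchy--Schwarz for the Hilbert--Schmidt inner product gives $b_{ij}\le\sqrt{\tr{\sigma_i^2}\,\tr{\sigma_j^2}}$ (the case $p_ip_j=0$ is trivial since all of $a_i,a_j,b_{ij}$ then vanish). Dividing the remaining inequality $\sqrt{(p_i^{\,2}-\tr{\sigma_i^2})(p_j^{\,2}-\tr{\sigma_j^2})}+\sqrt{\tr{\sigma_i^2}\,\tr{\sigma_j^2}}\le p_ip_j$ by $p_ip_j>0$ and setting $x=\tr{\sigma_i^2}/p_i^{\,2}\in[0,1]$, $y=\tr{\sigma_j^2}/p_j^{\,2}\in[0,1]$ reduces everything to the elementary estimate $\sqrt{(1-x)(1-y)}+\sqrt{xy}\le 1$, which is Cauchy--Schwarz for the vectors $(\sqrt{1-x},\sqrt{x})$ and $(\sqrt{1-y},\sqrt{y})$. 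Summing over $i\neq j$ yields the one-step inequality, and iterating gives the claim for all $p$. The main obstacle is conceptual rather than computational: recognizing that, because $\sqrt{1-\cdot}$ is concave, one must square and then match the squared-sum cross terms $\sqrt{a_ia_j}$ against the interference terms $b_{ij}$ in the purity, after which the problem localizes to a two-outcome Cauchy--Schwarz inequality.
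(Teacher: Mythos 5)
Your argument is correct, but it follows a genuinely different route from the paper's. You reduce to the one-step inequality (the $p$-step case then following from the Markov property, the tower property and induction), square both sides, and use the completeness relation to expand $\tr{\rho^2}$ into the diagonal purities $\tr{\sigma_i^2}$ plus the interference terms $b_{ij}$, which you dominate pairwise via Hilbert--Schmidt Cauchy--Schwarz and the scalar estimate $\sqrt{(1-x)(1-y)}+\sqrt{xy}\le 1$; all of these steps check out, including the degenerate case $p_i=0$. The paper instead proves the $p$-step bound directly and never squares: it factors each term as $\sqrt{(T_I-S_I)(T_I+S_I)}$ with $T_I=\tr{V_I\rho V_I^\dag}$ and $S_I=\sqrt{\tr{(V_I\rho V_I^\dag)^2}}=\|\rho^{1/2}V_I^\dag V_I\rho^{1/2}\|$, applies Cauchy--Schwarz to the sum over words $I$ (using $\sum_I T_I=1$), and then uses the triangle inequality for the Frobenius norm together with $\sum_I\rho^{1/2}V_I^\dag V_I\rho^{1/2}=\rho$. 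Both proofs rest on the completeness relation plus Cauchy--Schwarz-type inequalities, so they are close in spirit, but yours localizes the estimate to pairs of outcomes and is arguably more elementary, while the paper's global formulation buys something downstream: the intermediate quantity $\mathcal Q_p(\rho)$ and the equality case of the Frobenius triangle inequality (via strict convexity of the norm) are reused verbatim in the subsequent proof of almost sure purification, an equality analysis your pairwise decomposition would have to redo (by tracking when every pairwise inequality is saturated) to obtain the same corollary.
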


\begin{proof}
Let $\rho \in \S$ and let \(\norm{\cdot}\) denote the Frobenius norm, defined by \( \norm*{X} = \sqrt{\tr{X^\dag X}}\) for $X \in \B(\H).$ We have, for all \( n \) and all $p$,
\begin{align}
    \mathbb E [ \mathcal V(\rho_{n+p}) \vert \rho_n = \rho ] &= \sum_{I \in \I^p} \sqrt{1 - \frac{\tr{(V_I \rho V_I^\dag)^2}}{(\tr{V_I \rho V_I^\dag})^2}} \, \tr{V_I \rho V_I^\dag}
    \nonumber \\
    &= \sum_{I \in \I^p} \sqrt{(\tr{V_I \rho V_I^\dag})^2 - \tr{(V_I \rho V_I^\dag)^2}}
    \nonumber \\
    &= \sum_{I \in \I^p} \sqrt{ \oo \tr{V_I \rho V_I^\dag} - \sqrt{\tr{(V_I \rho V_I^\dag)^2}} \cc \oo \tr{V_I \rho V_I^\dag} + \sqrt{\tr{(V_I \rho V_I^\dag)^2}} \cc }
    \nonumber \\
    &\leq \sqrt{ 1 - \sum_{I \in \I^p} \sqrt{\tr{(V_I \rho V_I^\dag)^2}} } \, \sqrt{ 1 + \sum_{I \in \I^p} \sqrt{\tr{(V_I \rho V_I^\dag)^2}} }
    \nonumber \\
    &= \sqrt{ 1 - \oo \sum_{I \in \I^p} \sqrt{ \tr{(V_I \rho V_I^\dag)^2} } \cc^2 }
    \nonumber \\
    &= \sqrt{ 1 - \oo \sum_{I \in \I^p} \norm*{ \rho^{\frac{1}{2}} V_I^\dag V_I \rho^{\frac{1}{2}} } \cc^2 }
    \nonumber \\
    &\leq \sqrt{ 1 - \norm*{ \sum_{I \in \I^p} \rho^{\frac{1}{2}}   V_I^\dag V_I \rho^{\frac{1}{2}} }^2 }
    \nonumber \\
    &= \sqrt{ 1 - \norm*{\rho}^2 }
    \nonumber \\
    &= \mathcal V(\rho)
\end{align}
where the first inequality follows from the Cauchy--Schwarz inequality and the second from the triangle inequality. 
\end{proof}

The following proposition reproduces the K\"{u}mmerer–Maassen purification result.

\begin{proposition}[Almost sure purification]\label{prop:V(rho)_goes_to_zero_under_(Pur)}
    Assume that \pur holds. Then
    \beqeeq{
        \limn \mathcal V(\rho_n) = 0 \quad {\rm a.s.}
}
\end{proposition}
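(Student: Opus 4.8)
\emph{The plan} is to promote the supermartingale property of Proposition~\ref{prop:V(rho)_supermartingale} to almost sure convergence, and then to eliminate a strictly positive limit by examining the equality cases of the two inequalities used in that proof — showing that a state at which the $p$-step drift of $\mathcal V$ vanishes for every $p$ must span a dark subspace, which \pur forbids.

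\emph{Step 1 (convergence of the Lyapunov function).} Since $0\le\mathcal V\le1$ and, by the Markov property, $(\mathcal V(\rho_n))_n$ is a supermartingale for its natural filtration, Doob's convergence theorem will give $\mathcal V(\rho_n)\to\mathcal V_\infty$ a.s.\ for some $\mathcal V_\infty\ge0$; in particular $a_n:=\mathbb E[\mathcal V(\rho_n)]$ is non-increasing and bounded below. For each $p\in\N$ I would set $\delta_p(\rho):=\mathcal V(\rho)-\mathbb E[\mathcal V(\rho_p)\mid\rho_0=\rho]\ge0$, which is continuous on the compact set $\S$ (each summand $\sqrt{(\tr{V_I\rho V_I^\dagger})^2-\tr{(V_I\rho V_I^\dagger)^2}}$ is the square root of a nonnegative quadratic in $\rho$). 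A telescoping-with-gap-$p$ computation gives $\sum_{n\ge0}\mathbb E[\delta_p(\rho_n)]=\sum_{n\ge0}(a_n-a_{n+p})\le p<\infty$, so by Tonelli $\sum_n\delta_p(\rho_n)<\infty$ a.s., hence $\delta_p(\rho_n)\to0$ a.s. Intersecting these events over all $p$ (or, invoking Proposition~\ref{prop:verif_ND_finite_length}, only over $p\le\bar p$) and over $\{\mathcal V(\rho_n)\to\mathcal V_\infty\}$ produces a full-probability event $\Omega_0$.

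\emph{Step 2 (subsequential limits are frozen).} On $\Omega_0$, compactness of $\S$ gives a subsequential limit $\rho^*$ of $(\rho_n)$, and continuity yields $\mathcal V(\rho^*)=\mathcal V_\infty$ and $\delta_p(\rho^*)=0$ for all $p$.

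\emph{Step 3 (equality analysis — the main obstacle).} Here I would revisit the chain of inequalities in the proof of Proposition~\ref{prop:V(rho)_supermartingale}: $\delta_p(\rho^*)=0$ forces, in particular, equality in the triangle inequality $\sum_{I\in\O^p}\|\rho^{*\,1/2}V_I^\dagger V_I\rho^{*\,1/2}\|\ge\|\sum_{I\in\O^p}\rho^{*\,1/2}V_I^\dagger V_I\rho^{*\,1/2}\|$. Since the summands are positive semidefinite and the Frobenius norm is strictly convex, this means all of them are nonnegative multiples of one common $M\succeq0$ (which automatically makes the preceding Cauchy--Schwarz step tight too). Summing over $I\in\O^p$ and using the iterated completeness relation $\sum_{I\in\O^p}V_I^\dagger V_I=\mathbb I$ forces $\rho^*\propto M$, hence $\rho^{*\,1/2}V_I^\dagger V_I\rho^{*\,1/2}\propto\rho^*$ for every $I\in\O^p$; conjugating by the pseudo-inverse of $\rho^{*\,1/2}$ on $\range(\rho^*)$ turns this into $\pi\,V_I^\dagger V_I\,\pi\propto\pi$ for the support projection $\pi$ of $\rho^*$. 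As this holds for all $p$ and all $I\in\O^p$, if $\rho^*$ were not pure (so $\rank\pi\ge2$) then $\range(\pi)$ would be a dark subspace, contradicting \pur.

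\emph{Step 4 (conclusion).} Therefore every subsequential limit on $\Omega_0$ is pure, so $\mathcal V_\infty=\mathcal V(\rho^*)=0$, i.e.\ $\mathcal V(\rho_n)\to0$ almost surely. I expect Step~3 to be the delicate part: one must extract exactly the right equality condition from the triangle-inequality step and then, via the completeness relation, recognise the common proportionality operator as $\rho^*$ itself, so that ``vanishing drift at every order'' is literally the defining property of a dark subspace supporting $\rho^*$. The other ingredients — Doob's theorem, compactness, and the continuity/telescoping bookkeeping in Step~1 — are routine.
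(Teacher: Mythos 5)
Your proof is correct, and its overall architecture (supermartingale convergence of $\mathcal V(\rho_n)$, zero drift at accumulation points, equality analysis in the triangle inequality, contradiction with \pur via a dark subspace) coincides with the paper's. The one genuine difference is how you justify that accumulation points have vanishing $p$-step drift. The paper invokes a Kushner-type stochastic Lyapunov theorem (Theorem~\ref{thm:accumulation_points_are_zeros_of_the_increment} in the Appendix, adapted to $p$-step increments by splitting the chain into $p$ interleaved Markov subsequences), applied to the continuous nonpositive bound $\mathcal Q_p$. You instead work with the exact drift $\delta_p(\rho)=\mathcal V(\rho)-\mathbb E[\mathcal V(\rho_p)\mid\rho_0=\rho]\ge 0$ and prove the needed fact from scratch: the telescoping bound $\sum_{n\le N}\mathbb E[\delta_p(\rho_n)]=\sum_{n\le N}(a_n-a_{n+p})\le\sum_{n<p}a_n\le p$ gives $\sum_n\delta_p(\rho_n)<\infty$ a.s., hence $\delta_p(\rho_n)\to0$ a.s., and continuity of $\delta_p$ (each term being the square root of a nonnegative polynomial in $\rho$) transfers this to every subsequential limit. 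This makes the argument self-contained and arguably more elementary — you reprove exactly the special case of the Appendix theorem that is needed, without the residue-class decomposition — at the cost of having to check continuity of $\delta_p$ yourself; the paper's route outsources this to a standard reference and reuses the already-computed bound $\mathcal Q_p$. Your equality analysis in Step~3 (strict convexity of the Frobenius norm forcing all $\rho^{*1/2}V_I^\dagger V_I\rho^{*1/2}$ to be proportional to their sum $\rho^*$, then conjugating by the pseudo-inverse to obtain $\pi V_I^\dagger V_I\pi\propto\pi$) is identical to the paper's, and your observation that only the triangle-inequality equality case is actually needed is accurate.
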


\begin{proof}
As we have seen in the proof of Proposition \ref{prop:V(rho)_supermartingale}, for all $n \ge 0$ and all $p \in \N$ we have almost surely
\begin{equation*}
    \EE{\mathcal V(\rho_{n + p})}{\rho_n} - \mathcal V(\rho_n) \leq \mathcal Q_p(\rho_n),
\end{equation*}
where
\beqeeq{
  \mathcal Q_{ p}(\rho) := \sqrt{ 1 - \oo \sum_{I \in \I^p} \norm{ \rho^{1/2} V_I^\dag V_I \rho^{1/2} } \cc^2 } 
  - \sqrt{1 - \norm{\rho}^2} \leq 0, \qquad \rho \in \S.
}
Let us denote by $\S^{\textrm{acc}}$ the (almost sure) set of accumulation points of $(\rho_n)$. 
By Theorem~\ref{thm:accumulation_points_are_zeros_of_the_increment} in the Appendix, we have
\beqeeq{
  \S^{\textrm{acc}} \subset \bigcap_{p \geq 1} \{ \rho \in \S ~|~ \mathcal Q_p(\rho)=0\}.
}
Recall that we obtained $\mathcal Q_p(\rho) \le 0$ for all $p \in \mathbb N$ and all $\rho \in \S$ by applying the triangle inequality,
\beqeeq{
  \left\| \sum_{I \in \I^p} \rho^{1/2} V_I^\dag V_I \rho^{1/2} \right\|
  \leq \sum_{I \in \I^p} \norm{\rho^{1/2} V_I^\dag V_I \rho^{1/2}}.
}
Equality $\mathcal Q_p(\rho)=0$ holds only if we have equality in the triangle inequality,
\beqeeq{
\Big\|\sum_{I\in\I^p} \rho^{1/2}V_I^\dag V_I\rho^{1/2}\Big\|
    = \sum_{I\in\I^p}\|\rho^{1/2}V_I^\dag V_I\rho^{1/2}\|.
}
Since the Frobenius norm is strictly convex, this happens if and only if all the operators
$\rho^{1/2}V_I^\dag V_I\rho^{1/2}$ are nonnegative multiples of each other. As their sum is equal
to $\rho$, they must all be proportional to $\rho$, that is,
\beqeeq{
    \forall p \in \N,~~\forall I\in\I^p,\qquad
    \rho^{1/2}V_I^\dag V_I\rho^{1/2} \propto \rho.
}
Multiplying on the left and on the right by the square-root pseudo-inverse $\rho^{-1/2}$
(i.e.\ the inverse of $\rho^{1/2}$ on its support),
we finally obtain
\beqeeq{
  \forall p \in \N,~~\forall I \in \I^p, \qquad \pi V_I^\dag V_I \pi \propto \pi,
}
where \( \pi \) denotes the orthogonal projection onto \( \operatorname{range}(\rho^{1/2}) \). 
If $\rho$ has rank at least two, so does $\rho^{1/2}$, and hence $\pi$ as well, contradicting \pur. Therefore, the common zeros of all the functions $\mathcal Q_p,$ $p\in  \N,$ are precisely the rank-one operators, i.e.\ the pure states, and the trajectory $(\rho_n)$ accumulates only at pure states.

Since $(\mathcal V(\rho_n))_{n\ge 0}$ is a nonnegative supermartingale, it converges almost surely.
Let $\rho$ be any almost sure accumulation point of $(\rho_n)$. As shown above, every such $\rho$ 
must be pure, hence $\mathcal V(\rho)=0$. The almost sure limit of the real sequence $(\mathcal V(\rho_n))$ must therefore coincide with the 
value taken on these accumulation points, which forces 
\beqeeq{
    \lim_{n\to\infty} \mathcal V(\rho_n)=0 \quad \text{a.s.}
}

\end{proof}

\begin{remark}
When Assumption~$\pur$ fails, the common zeros of the functions $\mathcal Q_p$ are 
exactly the states $\rho$ supported on a dark subspace. In this situation, we recover the second alternative in the K\"{u}mmerer--Maassen theorem (see \cite[Section 5]{maassen2006purification}): when $\pur$ does not hold, the trajectory 
is asymptotically supported on a dark subspace at each time, and, by definition of 
dark subspaces, each Kraus operator sends a dark subspace to another through an 
isometry. As a consequence, the asymptotic dynamics is that 
of a random walk on the collection of dark subspaces, with unitary evolution along 
each transition. 
We refer to Ref.~\cite{benoist2024dark} for further discussion on the ergodic properties of this random walk between dark subspaces.
\end{remark}

\section{Exponential purification}\label{sec:purity}
In this section, we show that purification occurs exponentially fast in expectation. 
To this end, we consider the Lyapunov function $\mathcal V(\rho) $ introduced in Equation \eqref{eq:lyapunov.function}. 
\subsection{Main theorem}
Before stating the main result, we introduce a few notations. We denote by $\mathfrak{B}$ the set of all orthonormal bases
$\boldsymbol{\psi} = (\psi_1,\dots,\psi_d)$ of the Hilbert space $\mathcal H$.
We define the set of weights
\beqeeq{
\mathcal W
 := \left\{\, w=(w_{kl})_{1 \leq k<l \leq d} ~\big|~ 
\exists\,p\in\Delta_{d-1},\ \sum_{j=1}^d p_j^2<1,\
w_{kl} = \frac{2 p_k p_l}{1-\sum_j p_j^2}
\,\right\}.
}
Here $\Delta_{d-1}$ denotes the set of probability distributions $p =(p_1,\dots , p_d)$. For any multi-index $I=(i_1,\dots , i_p)$ we recall that  
$M_I = V_{i_1}^\dagger \cdots V_{i_p}^\dagger V_{i_p}\cdots V_{i_1}$ and denote by $M_I|_{\psi_k,\psi_l}$ the restriction of $M_I$ to the two dimensional space spanned by  $(\psi_k,\psi_l)$. Finally, $\bar p$ denotes the finite length necessary to form the observable space $(\O_{\bar p} = \O)$ derived in Section~\ref{subsec:purification_assumption}.

\begin{theorem}[{Exponential purification}]\label{thm:exponential_purification}
Assume that \pur holds. Then the quantum trajectory $(\rho_n)$
purifies exponentially fast in expectation: for all $n\ge0$,
\beqeeq{
    \E{\mathcal V(\rho_n)}
    \le 
    \mathcal V(\rho_0)\, e^{-\gamma\lfloor n/{\pbar}\rfloor}
}
where
\beqeeq{
    \gamma =
        -\ln \oo \sup_{\substack{\boldsymbol w \in \mathcal{W},~
                \boldsymbol{\psi}\in\mathfrak{B}}}
 \oo
            \sum_{I\in\I^{\pbar}}
            \sqrt{ \sum_{k<l} w_{kl} \det\bigl( M_I|_{\psi_k,\psi_l} \bigr) } \cc \cc, 
}
with $\gamma >0.$
\end{theorem}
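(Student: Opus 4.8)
The plan is to upgrade the supermartingale property of $(\mathcal V(\rho_n))$ (Proposition~\ref{prop:V(rho)_supermartingale}) to a genuine contraction after $\bar p$ steps,
\[
\EE{\mathcal V(\rho_{\bar p})}{\rho_0=\rho}\ \le\ e^{-\gamma}\,\mathcal V(\rho),\qquad \rho\in\S ,
\]
and then to iterate. Granted this, the Markov property and induction give $\E{\mathcal V(\rho_{\bar p m})}\le e^{-\gamma m}\mathcal V(\rho_0)$ for all $m\in\N$, and for general $n$ one writes $n=\bar p\lfloor n/\bar p\rfloor+r$ with $0\le r<\bar p$ and uses $\E{\mathcal V(\rho_n)}\le\E{\mathcal V(\rho_{\bar p\lfloor n/\bar p\rfloor})}$ (valid since $(\mathcal V(\rho_k))_k$ is a supermartingale), obtaining the stated bound. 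So the theorem reduces to (i) the one-step-by-$\bar p$ contraction and (ii) the strict positivity $\gamma>0$.

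For (i), diagonalize $\rho=\sum_j p_j\,|\psi_j\rangle\langle\psi_j|$ with $(\psi_j)\in\mathfrak B$ and $p\in\Delta_{d-1}$. Starting from the exact expression obtained inside the proof of Proposition~\ref{prop:V(rho)_supermartingale},
\[
\EE{\mathcal V(\rho_{\bar p})}{\rho_0=\rho}=\sum_{I\in\O^{\bar p}}\sqrt{\big(\tr{V_I\rho V_I^\dag}\big)^2-\tr{(V_I\rho V_I^\dag)^2}}\,,
\]
the key algebraic fact is that for $\sigma=V_I\rho V_I^\dag$ one has $(\tr{\sigma})^2-\tr{\sigma^2}=2\,e_2(\sigma)$, where $e_2(\sigma)$ is the second elementary symmetric function of the spectrum of $\sigma$, equivalently the sum of all $2\times2$ principal minors of $\sigma$ in any orthonormal basis. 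Writing $\sigma=AA^\dag$ with $A=\big[\sqrt{p_1}\,V_I\psi_1,\dots,\sqrt{p_d}\,V_I\psi_d\big]$ and applying the Cauchy--Binet expansion of $e_2$ gives $e_2(\sigma)=\sum_{k<l}p_kp_l\det(M_I|_{\psi_k,\psi_l})$, with $\det(M_I|_{\psi_k,\psi_l})=\|V_I\psi_k\|^2\|V_I\psi_l\|^2-|\langle V_I\psi_k,V_I\psi_l\rangle|^2$ the Gram determinant of $(V_I\psi_k,V_I\psi_l)$. Since $\mathcal V(\rho)^2=1-\tr{\rho^2}=2\sum_{k<l}p_kp_l$, dividing by $\mathcal V(\rho)$ (when $\rho$ is not pure) yields
\[
\frac{\EE{\mathcal V(\rho_{\bar p})}{\rho_0=\rho}}{\mathcal V(\rho)}=\sum_{I\in\O^{\bar p}}\sqrt{\ \sum_{k<l}w_{kl}\,\det(M_I|_{\psi_k,\psi_l})\ },\qquad w_{kl}=\frac{2p_kp_l}{1-\sum_j p_j^2}\in\mathcal W ,
\]
which is $\le e^{-\gamma}$ by the definition of $\gamma$; for pure $\rho$ both sides vanish. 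Conversely, every $(w,\psi)\in\mathcal W\times\mathfrak B$ is realised by a non-pure $\rho$ in exactly this way, so $e^{-\gamma}$ is exactly the supremum of the left-hand side over all non-pure states.

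For (ii) we must show $\sup_{\mathcal W\times\mathfrak B}f<1$, where $f(w,\psi)=\sum_{I\in\O^{\bar p}}\sqrt{\sum_{k<l}w_{kl}\det(M_I|_{\psi_k,\psi_l})}$. By (i) one has $f\le1$ on $\mathcal W\times\mathfrak B$; since $f$ is continuous, $\mathfrak B$ is compact and $\overline{\mathcal W}$ is compact, $f\le1$ on $\overline{\mathcal W}\times\mathfrak B$ and the maximum is attained, say at $(w^\star,\psi^\star)$. Suppose it equals $1$. If $w^\star\in\mathcal W$, it comes from a non-pure $\rho^\star$ with $\EE{\mathcal V(\rho^\star_{\bar p})}{\rho_0=\rho^\star}=\mathcal V(\rho^\star)$; tracing the inequalities in the proof of Proposition~\ref{prop:V(rho)_supermartingale} backwards — in particular the triangle inequality $\sum_I\|\rho^{1/2}M_I\rho^{1/2}\|\ge\big\|\sum_I\rho^{1/2}M_I\rho^{1/2}\big\|=\|\rho\|$ (Frobenius norm), whose equality forces, by strict convexity, all $\rho^{1/2}M_I\rho^{1/2}$ to be proportional and hence each proportional to $\rho$ — one gets $\pi M_I\pi\propto\pi$ for every $I\in\O^{\bar p}$, with $\pi$ the projection onto $\range(\rho^\star)$ and $\rank(\pi)\ge2$; by Proposition~\ref{prop:verif_ND_finite_length} this makes $\range(\pi)$ a dark subspace, contradicting \pur.

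It remains to rule out $w^\star\in\overline{\mathcal W}\setminus\mathcal W$. Such a $w^\star$ is a limit $w(p^{(n)})$ with $p^{(n)}$ converging to a vertex of $\Delta_{d-1}$; after relabelling the basis, $w^\star$ is supported on the pairs $\{1,j\}$ with weights $q_j\ge0$, $\sum_{j\ge2}q_j=1$, and $f(w^\star,\psi^\star)=\lim_n \EE{\mathcal V(\rho^{(n)}_{\bar p})}{\rho_0=\rho^{(n)}}/\mathcal V(\rho^{(n)})$ along states $\rho^{(n)}$ collapsing onto the pure state $|\phi\rangle\langle\phi|$, $\phi=\psi_1^\star$. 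Expanding $e_2(V_I\rho^{(n)}V_I^\dag)$ and $e_2(\rho^{(n)})$ to first order in $1-\max_j p^{(n)}_j$ identifies this limit as $\sum_{I\in\O^{\bar p}}\|V_I\phi\|\sqrt{\tr{\Pi_I V_I\tau V_I^\dag}}$, with $\tau=\sum_{j\ge2}q_j|\psi_j^\star\rangle\langle\psi_j^\star|$ a state on $\phi^\perp$ and $\Pi_I$ the projection orthogonal to $V_I\phi$; applying Cauchy--Schwarz over $I$ and using $\sum_I\|V_I\phi\|^2=1=\tr{\tau}$, equality with $1$ forces $\langle\phi|M_I|v\rangle=0$ for all $v\in\operatorname{supp}(\tau)$ and $\tr{M_I\tau}=\langle\phi|M_I|\phi\rangle$, for all $I\in\O^{\bar p}$. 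Propagating these linear relations to words of arbitrary length via the stabilisation $E_{\bar p}=E_{\bar p+1}=\dots$ of Proposition~\ref{prop:verif_ND_finite_length}, one then extracts a dark subspace of rank $\ge2$ built from $\phi$ and $\operatorname{supp}(\tau)$, again contradicting \pur. Hence the maximum is $<1$ and $\gamma>0$. This last (boundary) case — showing that $\EE{\mathcal V(\rho_{\bar p})}{\rho_0=\rho}/\mathcal V(\rho)$ cannot tend to $1$ along a sequence of states collapsing onto a pure state — is the main obstacle; everything else is routine given Propositions~\ref{prop:V(rho)_supermartingale} and~\ref{prop:verif_ND_finite_length}.
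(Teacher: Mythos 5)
Your overall strategy coincides with the paper's: reduce the theorem to a $\bar p$-step contraction $\mathbb E[\mathcal V(\rho_{\bar p})\mid\rho_0=\rho]\le e^{-\gamma}\mathcal V(\rho)$, identify the contraction ratio with $\sum_{I\in\mathcal O^{\bar p}}\sqrt{\sum_{k<l}w_{kl}\det(M_I|_{\psi_k,\psi_l})}$, and rule out the ratio approaching $1$ by splitting into the genuinely mixed case (equality in the triangle inequality, hence a dark subspace via strict convexity of the Frobenius norm) and the case of states collapsing onto a pure state (equality in Cauchy--Schwarz, hence again a dark subspace). Your derivation of the key identity via the second elementary symmetric polynomial and Cauchy--Binet is a clean alternative to the paper's direct expansion, your compactification through $\overline{\mathcal W}\times\mathfrak B$ is tidier than the paper's subsequence extraction, and the rewriting of the boundary limit as $\sum_I\|V_I\phi\|\sqrt{\operatorname{Tr}(\Pi_I V_I\tau V_I^\dagger)}$ together with the resulting equality conditions is correct and matches the paper's computation.

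There is, however, a genuine gap at the crux of the boundary case. From the equality conditions you correctly obtain, for all $I\in\mathcal O^{\bar p}$, that $\langle v|M_I|\phi\rangle=0$ for every $v\in\operatorname{supp}(\tau)$ and that $\operatorname{Tr}(M_I\tau)=\langle\phi|M_I|\phi\rangle$, and the stabilisation of Proposition~\ref{prop:verif_ND_finite_length} indeed propagates these linear relations to words of any length. But you then assert that ``one extracts a dark subspace of rank $\ge2$ built from $\phi$ and $\operatorname{supp}(\tau)$'' without saying how. When $\dim\operatorname{supp}(\tau)=1$ this is immediate: $\operatorname{span}(\phi,\psi)$ with $\tau=\ketbra{\psi}$ is $\bar p$-step dark. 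When $\dim\operatorname{supp}(\tau)\ge2$, the condition $\operatorname{Tr}(M_I\tau)=\langle\phi|M_I|\phi\rangle$ only constrains the $\tau$-\emph{average} of the compression $\pi_\tau M_I\pi_\tau$; the subspace $\operatorname{span}(\phi)\oplus\operatorname{supp}(\tau)$ is not dark in general (the compressions need not be scalar), and to exhibit a two-dimensional dark subspace $\operatorname{span}(\phi,\psi)$ you must produce a \emph{single} unit vector $\psi\in\operatorname{supp}(\tau)$ satisfying $\langle\psi|M_I|\psi\rangle=\langle\phi|M_I|\phi\rangle$ simultaneously for every word $I$. Nothing in your argument produces such a $\psi$; this is precisely the step the paper handles with its connectedness/intermediate-value argument yielding a vector with $\langle\psi|B_I|\psi\rangle=M_I^{11}$, and it is the non-routine part of proving $\gamma>0$. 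As written, your proof is complete only when the limiting weights concentrate on a single pair, so the theorem is not fully established.
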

{
The theorem immediately yields the following tail bound.
\begin{corollary}[Exponential tail bound]\label{cor:exponential_tail_bound_purity}
Under the assumptions of Theorem~\ref{thm:exponential_purification}, for all $\varepsilon>0$ and all $n\ge 0$,
\beqeeq{
\mathbb P\big(\mathcal V(\rho_n)\ge \varepsilon\big)
\le
\frac{\mathcal V(\rho_0)}{\varepsilon}\, e^{-\gamma\lfloor n/{\pbar}\rfloor}.
}
\end{corollary}

\begin{proof}
By Markov's inequality,
\beqeeq{
\mathbb P\big(\mathcal V(\rho_n)\ge \varepsilon\big)
\le
\frac{\E{\mathcal V(\rho_n)}}{\varepsilon},
}
and the result follows from Theorem~\ref{thm:exponential_purification}.
\end{proof}
}

Before proceeding with the proof of the theorem, let us state a few remarks concerning the rate.
\begin{itemize}
\item For rank-one Kraus operators,
    the rate is equal to $\gamma = +\infty$, using the convention $\ln(0) = -\infty$.
    In this case, the quantum state becomes pure after one single step of time.
\item When Assumption~$\pur$ does not hold, so that there exists a dark subspace,
the rate is $\gamma = 0$, consistent with the fact that the purity remains constant for trajectories with initial states supported on a dark subspace.
    \item For a qubit, with $\dim \H = 2$, the rate simplifies to
    \beqeeq{
        \gamma = - \ln \oo \sum_{I \in \I^{\bar p}} \sqrt{\det (M_I)} \cc.
}
\end{itemize}

The rate $\gamma$ admits a natural geometric interpretation in terms of the contraction of
two--dimensional volumes induced by the Kraus operators, see Fig.~\ref{fig:contraction}.
For a given eigenbasis $(\psi_k)$ of $\rho$, each pair $(\psi_k,\psi_l)$ spans a
two--dimensional subspace of the Hilbert space, and the quantity
$\det\bigl(M_I|_{\psi_k,\psi_l}\bigr)$ measures the contraction of the oriented area
generated by this pair under the action of the operator $M_I$.
The weights $w_{kl}$ encode the contribution of each such two--dimensional component, so that
the sum $\sum_{k<l} w_{kl}\det\bigl(M_I|_{\psi_k,\psi_l}\bigr)$ represents a weighted average
of area contractions over all pairs of eigenvectors.

The summation over all words $I\in\I^{\bar p}$ captures the cumulative effect of the
measurement channel over blocks of length $\bar p$, while the supremum over admissible
weights and orthonormal bases extracts the worst--case contraction scenario. The strict positivity of $\gamma$ is precisely what yields the exponential decay of
$\E{\mathcal V(\rho_n)}$ established in
Theorem~\ref{thm:exponential_purification}.\\

\begin{figure}[htbp]
    \centering
    \begin{tikzpicture}[scale=1.2]
        \begin{scope}[shift={(0,0)}]
            \node at (-2,2) {\large (a)};
            \draw[->, thick] (0,0) -- (2,0) node[right] {$\psi_k$};
            \draw[->, thick] (0,0) -- (0,2) node[above] {$\psi_l$};
            \fill[blue!20, opacity=0.4]
                (0,0) -- (2,0) -- (2,2) -- (0,2) -- cycle;
            \draw[->, thick] (3,1) -- (4.3,1) node[midway, above] {$V_I$};
            \begin{scope}[shift={(5.2,0)}]
                \draw[->, thick] (0,0) -- (1.2,0) node[midway, below] {$V_I\psi_k$};
                \draw[->, thick] (0,0) -- (0,1.2) node[midway, left] {$V_I\psi_l$};
                \fill[red!20, opacity=0.4]
                    (0,0) -- (1.2,0) -- (1.2,1.2) -- (0,1.2) -- cycle;
                \node at (2.3,1.7)
                    {\small $\det\bigl(M_I|_{\psi_k,\psi_l}\bigr)$};
                \draw[thick] (1.05,1.05) -- (1.9,1.55);
            \end{scope}
        \end{scope}
        \begin{scope}[shift={(0,-3)}]
            \node at (-2,2) {\large (b)};
            \draw[->, thick] (0,0) -- (2,0) node[right] {$\psi_k$};
            \draw[->, thick] (0,0) -- (0,2) node[above] {$\psi_l$};
            \fill[blue!20, opacity=0.4]
                (0,0) -- (2,0) -- (2,2) -- (0,2) -- cycle;
            \draw[->, thick] (3,1) -- (4.3,1) node[midway, above] {$V_I$};
            \begin{scope}[shift={(5.2,0)}]
                \draw[->, thick] (0,0) -- (1.4,0.15) node[right] {$V_I\psi_k$};
                \draw[->, thick] (0,0) -- (0.45,1.0) node[above] {$V_I\psi_l$};
                \fill[red!20, opacity=0.4]
                    (0,0)
                    -- (1.4,0.15)
                    -- (1.85,1.15)
                    -- (0.45,1.0)
                    -- cycle;
                \node at (2.6,1.55)
                    {\small $\det\bigl(M_I|_{\psi_k,\psi_l}\bigr)$};
                \draw[thick] (1.0,0.75) -- (1.9,1.25);
            \end{scope}
        \end{scope}

    \end{tikzpicture}
    \caption{
        (a) Geometric action of a Kraus operator $V_I$ on the two--dimensional subspace $\spn(\psi_k,\psi_l)$ when this subspace is dark. For any word $I$, the images $V_I\psi_k$ and $V_I\psi_l$ remain orthogonal and have the same norm. 
        (b) Geometric action of a Kraus word $V_I$ on the two--dimensional subspace $\spn(\psi_k,\psi_l)$ in the non--dark case. There exists at least one word $I$ such that the images $V_I\psi_k$ and $V_I\psi_l$ are not orthogonal and/or do not have the same norm, leading to a strict distortion of the associated two--dimensional area.
        }
    \label{fig:contraction}
\end{figure}
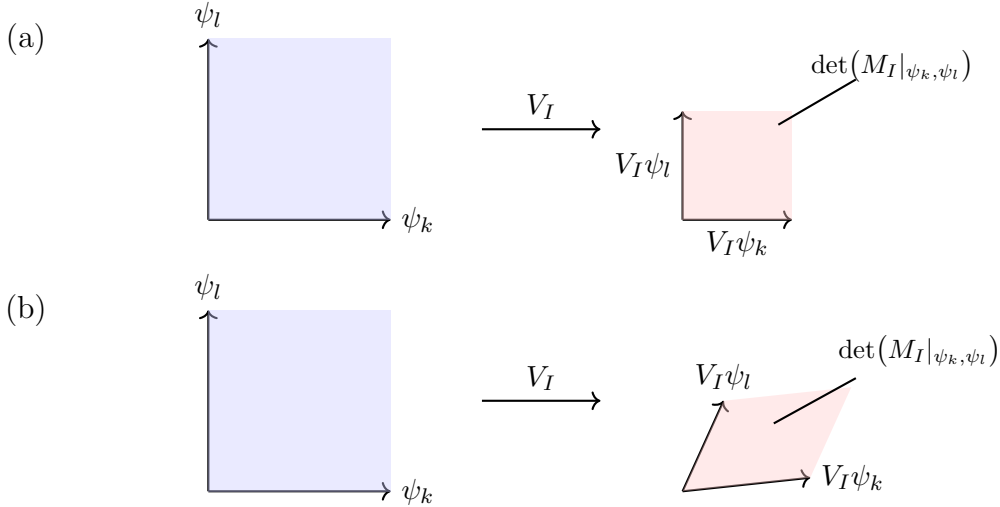

We now proceed with the proof of the theorem.

\begin{proof}[Proof of Theorem \ref{thm:exponential_purification}]

The proof consists in showing that there exists $\lambda < 1$ such that, for all $n,$
\begin{equation}\label{eq:strict_contraction}
    \EE{\mathcal V(\rho_{n+\bar p})}{\rho_n} \leq \lambda \mathcal V(\rho_n) \quad \as
\end{equation}
Then, taking expectation value on both sides will yield the result. Define 
\beqeeq{ \S_\mix = \set{ \rho \in \S \mid \tr{\rho^2} < 1 } \qquad ; \qquad\S_\pure = \set{ \rho \in \S \mid \tr{\rho^2} =1}.} For $\rho \in \S_\mix$ (so that $\V(\rho)>0$), we may write
\beqeeq{
	\EE{\V(\rho_{n+\bar p})}{\rho_n=\rho} = \frac{\EE{\V(\rho_{n+\bar p})}{\rho_n=\rho}}{\V(\rho)} \V(\rho).
}
We then define, for all $\rho \in \S_\mix,$
\beqeeq{
	\lambda(\rho) : = \frac{\EE{\V(\rho_{n+\bar p})}{\rho_n = \rho}}{\V(\rho)}.
}
From the proof of Proposition \ref{prop:V(rho)_goes_to_zero_under_(Pur)}, the Lyapunov
function is strictly decreasing in expectation on $\S_\mix$ under $\pur$, and therefore
\begin{equation}\label{eq:lambda_rho_<1}
    \forall \rho \in \S_\mix,~~ \lambda(\rho)<1.
\end{equation}
Yet our aim is to show that \( \sup_{\rho \in \S_\mix} \lambda(\rho) < 1, \) which would then yield Eq.~\eqref{eq:strict_contraction} with $\lambda <1.$ Since $\S_\mix$ is not compact, we cannot deduce a uniform gap from~$1$ directly. 
We proceed by contradiction and suppose that there exists a sequence of states
\beqeeq{
  \bigl(\rho^{(n)}\bigr)_{n \ge 0} \subset \mathcal{S}_{\mathrm{mix}}
  \quad \text{such that} \quad
  \lim_{n \to \infty} \lambda\!\bigl(\rho^{(n)}\bigr) = 1.
}
By compactness of the state space \( \mathcal{S} \), we may assume, up to extraction, that this sequence is convergent:
\beqeeq{
  \rho^{(n)} \xrightarrow[n \to \infty]{} \rho^* \in \overline{\mathcal{S}_{\mathrm{mix}}} = \mathcal{S}.
}
Necessarily, however, \( \rho^* \in \mathcal{S}_{\mathrm{pure}} \). Indeed, if \( \rho^* \in \S \setminus \S_{\pure} = \mathcal{S}_{\mathrm{mix}} \), then by continuity of $\lambda(\rho)$ on $\S_\mix$ we would have \( \lambda(\rho^*) = 1 \), which would contradict Equation \eqref{eq:lambda_rho_<1}.

To analyze the behavior of the coefficient $\lambda(\rho)$ near pure states, we decompose
any state $\rho \in \S_\mix$ as $\rho=\sum_k p_k \ketbra{\psi_k},$ for some
probability vector $\mathbf{p}=(p_k)$, with $\sum_k p_k^2 < 1$, and some
orthonormal basis $\boldsymbol{\psi}=\{\ket{\psi_k}\}$ of $\H$. We have

\begin{align}
    \mathbb E [ \mathcal V(\rho_{n+\bar p}) \vert \rho_n = \rho ] &= \sum_{I \in \I^{\bar p}} \sqrt{1-\frac{\tr{\paren{V_I \rho {V_I}^\dag}^2}}{(\tr{V_I \rho {V_I}^\dag})^2}}\tr{V_I \rho {V_I}^\dag} \nonumber\\
    &= \sum_{I \in \I^{\bar p}} \sqrt{(\tr{V_I \rho {V_I}^\dag})^2 - \tr{(V_I \rho {V_I}^\dag)^2}} \nonumber\\
    &= \sum_{I \in \I^{\bar p}} \sqrt{ \sum_{k, l} p_k p_l \bra{\psi_k} M_I \ket{\psi_k} \bra{\psi_l} M_I \ket{\psi_l}  - \sum_{k,l} p_k p_l \abs{\bra{\psi_k} M_I \ket{\psi_l}}^2 } \nonumber\\
    &= \sum_{I \in \I^{\bar p}} \sqrt{ \sum_{k \neq l} p_k p_l \oo\bra{\psi_k} M_I \ket{\psi_k} \bra{\psi_l} M_I \ket{\psi_l}  - \abs{\bra{\psi_k} M_I \ket{\psi_l}}^2 \cc} \nonumber\\
    &= \frac{\sum_{I \in \I^{\bar p}} \sqrt{ \sum_{k \neq l} p_k p_l \oo\bra{\psi_k} M_I \ket{\psi_k} \bra{\psi_l} M_I \ket{\psi_l}  - \abs{\bra{\psi_k} M_I \ket{\psi_l}}^2 \cc}}{\sqrt{1 - \sum_j p_j^2} }  \mathcal V(\rho) \nonumber\\
    &= \sum_{I \in \I^{\bar p}}\sqrt{\sum_{k < l} \frac{2\,p_k p_l}{1-\sum_j p_j^2} \oo\bra{\psi_k} M_I \ket{\psi_k} \bra{\psi_l} M_I \ket{\psi_l}   - \abs{\bra{\psi_k} M_I \ket{\psi_l}}^2  \cc}  \mathcal V(\rho) \nonumber\\
    &= \sum_{I \in \I^{\bar p}} \sqrt{ \sum_{k < l } w_{kl} \det (M_I \vert_{\psi_k,\psi_l})} \V(\rho)
\end{align}
with $w_{kl} = \frac{2 p_k p_l}{1 - \sum_j p_j^2}.$
Note that we then have
\beqeeq{
	\lambda(\rho) = \sum_{I \in \I^{\bar p}} \sqrt{ \sum_{k < l } w_{kl} \det (M_I \vert_{\psi_k,\psi_l})}
}
and 
\beqeeq{
	\sup_{\rho \in \S_\mix} \lambda(\rho) = \sup_{ w \in \W, \boldsymbol \psi \in \mathfrak B} \sum_{I \in \I^{\bar p}} \sqrt{ \sum_{k < l } w_{kl} \det (M_I \vert_{\psi_k,\psi_l})}.
}
Let us then write
\beqeeq{
  \rho^{(n)} = \sum_k p_k^{(n)} \ketbra{\psi_k^{(n)}}
}
as a convex decomposition into pure states of the sequence $(\rho^{(n)})$ achieving the supremum.  
Since \( \rho^{(n)} \to \rho^* =: \ketbra{\psi_1} \in \mathcal{S}_{\mathrm{pure}} \), we may assume, up to relabeling and extraction, that 
\beqeeq{
  \ketbra{\psi_1^{(n)}} \to \ketbra{\psi_1}
  \quad \text{and} \quad
  p_1^{(n)} \to 1 \quad \text{as } n \to \infty.
}
Writing $w_{kl}^{(n)} = \frac{2 p_k^{(n)} p_l^{(n)}}{1 - \sum_j (p_j^{(n)})^2}$ we then have, up to extraction again,
\beqeeq{
  w_{kl}^{(n)} \xrightarrow[n \to \infty]{} 
  \begin{cases}
    p_l' & \text{if } k = 1,\\[2mm]
    0 & \text{if } k \neq 1
  \end{cases}
}
where {\( p_l' = \limn \dfrac{2 p_l^{(n)}}{1 - \sum_j (p_j^{(n)})^2} \)}.  
Moreover, still up to extraction, we may assume that for all \( l = 2, \dots, d \), 
\beqeeq{
  \ket{\psi_l^{(n)}} \xrightarrow[n \to \infty]{} \ket{\psi_l}
}
for some orthogonal states $\{\psi_l\}.$ In the following, we will denote for simplicity $M_I^{ll} := \bra{\psi_l}  M_I  \ket{\psi_l}.$
The limit $\lambda^*$ of the coefficient $\lambda(\rho^{(n)})$ then satisfies
\begin{align}
\lambda^* &= \sum_{I \in \I^{\bar p}} \sqrt{ \sum_{l\neq 1} p_l^\prime \det ( M_I|_{\psi_l,\psi_1})} \nonumber\\
&= \sum_{I \in \I^{\bar p}} \sqrt{ \sum_{l\neq 1} p_l^\prime (M_I^{ll}M_I^{11} - \abs{M_I^{l1}}^2)} \nonumber\\
&\leq \sum_{I \in \I^{\bar p}} \sqrt{ \sum_{l\neq 1} p_l^\prime M_I^{ll}M_I^{11}} \nonumber\\
&= \sum_{I \in \I^{\bar p}} \sqrt{ \overline{M}_I M_I^{11}} \nonumber\\
& \leq \sqrt{\oo\sum_{I \in \I^{\bar p}} \overline{M}_I \cc} \sqrt{\oo\sum_{I \in \I^{\bar p}} M_I^{11} \cc} \nonumber\\
&=1
\end{align}
where we defined $\overline{M}_I := \sum_{l \neq 1} p_l^\prime M_I^{ll},$ satisfying $\sum_{I \in \I^\pbar} \overline{M}_I =1.$
Note that if $\abs{M_I^{l1}}\neq 0$ for some $l$ {with $p_l^\prime \neq 0,$} then the inequality is strict in the first inequality and $\lambda^*<1.$
We then suppose from now on that $M_I^{l1} = 0$ for all $l$ {such that $p_l^\prime \neq 0.$}

To have $\lambda^*=1,$ we must have equality in the Cauchy--Schwarz inequality, in which case there exists a scalar $\alpha\ge 0$
such that
\beqeeq{
    \forall\, I\in\I^\pbar, \quad M_I^{11} = \alpha \sum_{l\neq 1} p_l^\prime\, M_I^{ll}
    .
}
Summing over $I$ gives $\alpha =1$ and therefore
\begin{equation}\label{eq:link_between_M_I^00_and_M_I^kk}
    \forall\, I\in\I^\pbar, \quad M_I^{11}
    = \sum_{l\neq 1} p_l^\prime\, M_I^{ll}
    .
\end{equation}
{This relation can be rewritten in a more compact form by introducing the states
$$
\rho_0^{\prime} := \ketbra{\psi_1},
\qquad
\rho_0^{\prime\prime} := \sum_{l \neq 1} p_l^\prime\, \ketbra{\psi_l}.
$$
Then, the previous equality reads
\begin{equation}
    \forall I \in \I^{\pbar},~~\tr{M_I \rho_0^{\prime}} = \tr{M_I \rho_0^{\prime\prime}}.
\end{equation}
In particular, this implies that the associated probability measures coincide:
$$
\P_{\rho_0^{\prime}} = \P_{\rho_0^{\prime\prime}}.
$$
Let us consider now a trajectory starting from the state
$$
\rho_0 = \frac12 \rho_0^{\prime} + \frac12 \rho_0^{\prime\prime}.
$$
By equality of the probability measures, for any measurement record $I$ we have
$$
\tr{V_I \rho_0^{\prime} V_I^\dag}
= \tr{V_I \rho_0^{\prime\prime} V_I^\dag}
= \tr{V_I \rho_0 V_I^\dag}.
$$
Hence,
$$
\rho_n
= \frac{V_I \rho_0 V_I^\dag}{\tr{V_I \rho_0 V_I^\dag}}
= \frac12 \rho_n^{\prime} + \frac12 \rho_n^{\prime\prime},
$$
where
$$
\rho_n^{\prime} := \frac{V_I \rho_0^{\prime} V_I^\dag}{\tr{V_I \rho_0^{\prime} V_I^\dag}},
\qquad
\rho_n^{\prime\prime} := \frac{V_I \rho_0^{\prime\prime} V_I^\dag}{\tr{V_I \rho_0^{\prime\prime} V_I^\dag}}.
$$
Since $M_I^{l1} = 0$ for all $l$ {such that $p_l^\prime \neq 0$}, the operators $M_I$ {admit a block diagonal structure on the subspace}
$$\supp(\rho_0') \oplus \supp(\rho_0'')$$
for all words $I$ of length $\bar p$, and since any $M_I$ can be written as a linear combination of such operators (by the completeness relation for $|I|\le \bar p$ and by the fact that $\O=\O_{\bar p}$ for $|I| > \bar p$), it follows that $M_I$ is block diagonal {on $\supp(\rho_0') \oplus \supp(\rho_0'')$} for all $I$ of any length. Therefore,
$$
\tr{\rho_n' \rho_n''}
\propto
\tr{V_I \rho_0' V_I^\dag V_I \rho_0'' V_I^\dag}
=
\tr{\rho_0' M_I \rho_0'' M_I}
=
0,
$$
which shows that the supports of $\rho_n'$ and $\rho_n''$ remain orthogonal. Consequently,
$$
\tr*{\rho_n^2}
= \frac14 \tr*{(\rho_n^{\prime})^2}
+ \frac14 \tr*{(\rho_n^{\prime\prime})^2}.
$$
Since $\rho_n^{\prime}$ is pure, $\tr*{(\rho_n^{\prime})^2}=1$, and since $\rho_n^{\prime\prime}$ is a state, $\tr*{(\rho_n^{\prime\prime})^2}\leq 1$. Hence, for all $n,$
$$
\tr*{\rho_n^2}
\leq \frac12.
$$
This contradicts the almost sure purification of all trajectories, regardless of the initial state, when \pur holds.}

Therefore, there cannot exist a sequence of states $(\rho^{(n)}) \subset \S_\mix$ such that $\lambda(\rho^{(n)}) \to 1$ and we conclude
\beqeeq{
	 \sup_{\rho \in \S_\mix} \lambda(\rho) < 1.
}
Denoting $\lambda = \sup_{\rho \in \S_\mix} \lambda(\rho),$ we then have, almost surely,
$$
\EE{\V(\rho_{n+\bar p})}{\rho_n} \leq \lambda \V(\rho_n).
$$
Taking expectation on both sides and using the tower property yields
$$
\E{\V(\rho_{n+\bar p})}
= \E{\EE{\V(\rho_{n+\bar p})}{\rho_n}}
\le \lambda\, \E{\V(\rho_n)}.
$$
Iterating this inequality gives, for all $k\ge 0$,
$$
\E{\V(\rho_{k\bar p})} \le \lambda^k \V(\rho_0).
$$
Now write $n = k\bar p + r$ with $k=\lfloor n/\bar p\rfloor$ and $0\le r<\bar p$.
Since $(\mathcal V(\rho_n))$ is a supermartingale, we have $\E{\V(\rho_n)} \le \E{\V(\rho_{k\bar p})}$,
and applying the previous bound yields
$$
\E{\V(\rho_n)} \le \lambda^{\lfloor n/\bar p\rfloor}\,\V(\rho_0).
$$
Setting $\gamma = -\ln(\lambda) >0$, we finally obtain
$$
\E{\V(\rho_n)}
\le 
\V(\rho_0)\, e^{-\gamma\lfloor n/\bar p\rfloor}.
$$
\end{proof}

\subsection{Computing the rate bound for different lengths}

For each block length $p\ge1$, we introduce the associated $p$--step rate
\begin{equation}\label{eq:def_gamma_p}
    \gamma_{p}
    :=
    -\ln \Bigg(
        \sup_{\boldsymbol w , \boldsymbol{\psi}}
            \sum_{I\in\I^{ p}}
            \sqrt{\sum_{k < l} w_{kl} \det \big( M_I|_{\psi_k,\psi_l} \big)}
    \Bigg).
\end{equation}
For every $p\ge1$, the same calculations as in the previous section yield the bound
\begin{equation}\label{eq:exp_bound_with_arbitrary_p}
    \E{\mathcal V(\rho_n)}
    \le \mathcal V(\rho_0)\, e^{-\gamma_p\lfloor n/p\rfloor}.
\end{equation}
We can guarantee that $\gamma_p>0$ for $p\ge\bar p$,
where we recall $\bar p$ is the length given in Proposition~\ref{prop:verif_ND_finite_length}. Therefore, the bound is \textit{a priori} non-trivial from $\bar p$ onward. The rate $\gamma$ appearing in Theorem~\ref{thm:exponential_purification} is precisely
$\gamma = \gamma_{\bar p}$. Nevertheless, increasing the length $p$ on which $\gamma_p$ is computed can potentially improve the bound — at least at times $n$ that are multiples of $p$, the bound being constant in between consecutive multiples — as we shall see below.

First note that the rate $\gamma_p$ ($p \geq 1$) is expressed in terms of the contraction factor over $p$ steps,
\beqeeq{
    \gamma_p = - \ln (\lambda_p),
}
where
\beqeeq{
    \lambda_p(\rho)
    :=
    \frac{\EE{\mathcal V(\rho_p)}{\rho_0=\rho}}{\mathcal V(\rho)},
    \qquad 
    \lambda_p := \sup_{\rho\in\S_{\mix}} \lambda_p(\rho).
}

Below, we show that the sequence $(\lambda_p)$ is sub-multiplicative, which implies that
$(\gamma_p)$ is super-additive.

\begin{proposition}
\label{prop:lambdan}
The sequence $(\lambda_p)$ is sub-multiplicative. For all $p,q \in \N,$
\beqeeq{
\lambda_{p+q} \leq \lambda_p \lambda_q.
}
Consequently the following limit exists and satisfies
\beqeeq{
\lim_{p \to \infty} \lambda_p^{1/p} = \inf_{p \geq 1} \lambda_p^{1/p}.
}
\end{proposition}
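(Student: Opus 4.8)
The plan is to prove submultiplicativity $\lambda_{p+q}\le\lambda_p\lambda_q$ by a conditioning argument: factor the $(p+q)$-step conditional expectation of $\mathcal V$ through the intermediate time $p$, use the one-block contraction bound at step $q$ inside the conditional expectation, and then apply the step-$p$ contraction bound to what remains. Concretely, fix $\rho\in\S_{\mix}$. By the Markov property,
\[
\EE{\mathcal V(\rho_{p+q})}{\rho_0=\rho}
=\EE{\,\EE{\mathcal V(\rho_{p+q})}{\rho_p}\,}{\rho_0=\rho}.
\]
For each realization of $\rho_p$, the inner conditional expectation is $\lambda_q(\rho_p)\,\mathcal V(\rho_p)$ when $\rho_p\in\S_{\mix}$, and equals $0=\mathcal V(\rho_p)\cdot$(anything) when $\rho_p\in\S_{\pure}$; in either case it is bounded by $\lambda_q\,\mathcal V(\rho_p)$ since $\lambda_q=\sup_{\sigma\in\S_{\mix}}\lambda_q(\sigma)$. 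Substituting and pulling out the constant $\lambda_q$,
\[
\EE{\mathcal V(\rho_{p+q})}{\rho_0=\rho}
\le \lambda_q\,\EE{\mathcal V(\rho_p)}{\rho_0=\rho}
= \lambda_q\,\lambda_p(\rho)\,\mathcal V(\rho)
\le \lambda_p\lambda_q\,\mathcal V(\rho),
\]
where the penultimate step uses the definition of $\lambda_p(\rho)$ and the last uses $\lambda_p(\rho)\le\lambda_p$. Dividing by $\mathcal V(\rho)>0$ and taking the supremum over $\rho\in\S_{\mix}$ gives $\lambda_{p+q}\le\lambda_p\lambda_q$.

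The one subtlety to handle carefully is the contribution of trajectories that have already purified by time $p$ (i.e. $\rho_p\in\S_{\pure}$), on which $\lambda_q(\rho_p)$ is not defined; the clean way around this is exactly the observation above that $\mathcal V(\rho_p)=0$ there, so those terms drop out of $\EE{\mathcal V(\rho_{p+q})}{\rho_0=\rho}$ regardless, and the bound by $\lambda_q\,\mathcal V(\rho_p)$ holds trivially. (Equivalently, one notes $\EE{\mathcal V(\rho_{p+q})}{\rho_p}\le\lambda_q\,\mathcal V(\rho_p)$ holds on all of $\S$, which is immediate from the step-$q$ version of the inequality established in the proof of Theorem~\ref{thm:exponential_purification}.)

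Given submultiplicativity of $(\lambda_p)$, equivalently superadditivity of $(\gamma_p)$ via $\gamma_p=-\ln\lambda_p$, the existence of $\lim_{p\to\infty}\lambda_p^{1/p}$ and its identification with $\inf_{p\ge1}\lambda_p^{1/p}$ is Fekete's subadditive lemma applied to the sequence $(\ln\lambda_p)$ (using the convention that the statement is vacuous / the limit is $0$ in the degenerate case where some $\lambda_p=0$, corresponding to rank-one Kraus products). I expect the conditioning/purified-trajectory bookkeeping to be the only real point requiring care; the Fekete step is routine.
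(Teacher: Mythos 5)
Your proof is correct and follows essentially the same route as the paper's: condition on $\rho_p$ via the tower property, bound the inner conditional expectation by $\lambda_q\,\mathcal V(\rho_p)$, then apply the step-$p$ bound and take the supremum. Your explicit handling of the purified case $\rho_p\in\S_{\pure}$ (where $\mathcal V(\rho_p)=0$) is a small but welcome refinement, since the paper's version divides by $\mathcal V(\rho_p)$ without comment; the Fekete step is likewise the intended one.
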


\begin{proof}
Let $p,q \in \N$. Using the tower property of conditional expectation, we compute
\begin{align*}
\EE{\mathcal V(\rho_{p+q})}{\rho_0} 
&= \EE*{ \EE*{ \mathcal V(\rho_{p+q}) }{ \rho_p } }{ \rho_0 } \\
&= \EE*{ \frac{ \EE*{ \mathcal V(\rho_{p+q}) }{ \rho_p } }{ \mathcal V(\rho_p) } \cdot \mathcal V(\rho_p) }{ \rho_0 }.
\end{align*}
By definition of $\lambda_q$, we have almost surely
\beqeeq{
\EE*{ \mathcal V(\rho_{p+q}) }{ \rho_p } \leq \lambda_q \mathcal V(\rho_p)
}
hence
\beqeeq{
\frac{ \EE*{ \mathcal V(\rho_{p+q}) }{ \rho_p } }{ \mathcal V(\rho_p) } \leq \lambda_q.
}
Substituting this into the previous inequality gives
\begin{align*}
\EE{\mathcal V(\rho_{p+q})}{\rho_0} 
&\leq \lambda_q \EE*{ \mathcal V(\rho_p) }{ \rho_0 } \\
&\leq \lambda_q \lambda_p \mathcal V(\rho_0).
\end{align*}
Taking the supremum over $\rho_0\in\S_{\mix}$ proves the result.
\end{proof}

\begin{corollary}
    The sequence $(\gamma_p)_{p\ge 1}$ is super-additive, that is,
    \beqeeq{
        \gamma_{p+q} \,\ge\, \gamma_p + \gamma_q
        \qquad \text{for all } p,q \ge 1.
}
    As a consequence, the following limit exists and satisfies
    \beqeeq{
        \lim_{p\to\infty} \frac{\gamma_p}{p} 
        \;=\; \sup_{p\ge 1} \frac{\gamma_p}{p}.
}
\end{corollary}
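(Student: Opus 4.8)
The plan is to deduce the corollary directly from Proposition~\ref{prop:lambdan} by applying the decreasing map $x\mapsto-\ln x$ to the inequalities it provides; there is essentially no new work to do. First I would recall the relation $\gamma_p=-\ln\lambda_p$ for every $p\ge1$, together with the convention $\ln 0=-\infty$ (so that $\gamma_p=+\infty$ precisely when $\lambda_p=0$, which by the first remark after Theorem~\ref{thm:exponential_purification} is the rank-one Kraus case). Taking $\ln$ of the sub-multiplicativity bound $\lambda_{p+q}\le\lambda_p\lambda_q$ from Proposition~\ref{prop:lambdan} gives $\ln\lambda_{p+q}\le\ln\lambda_p+\ln\lambda_q$, since $\ln$ is increasing on $(0,\infty)$; negating both sides yields $\gamma_{p+q}\ge\gamma_p+\gamma_q$. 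In the degenerate case where $\lambda_p=0$ or $\lambda_q=0$, sub-multiplicativity forces $\lambda_{p+q}=0$ as well, so both sides equal $+\infty$ and the inequality still holds. This establishes super-additivity.

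For the limit statement, I would read it straight off the second conclusion of Proposition~\ref{prop:lambdan}. Since $\lambda_p^{1/p}=e^{-\gamma_p/p}$, the identity $\lim_{p\to\infty}\lambda_p^{1/p}=\inf_{p\ge1}\lambda_p^{1/p}$ becomes $\lim_{p\to\infty}e^{-\gamma_p/p}=\inf_{p\ge1}e^{-\gamma_p/p}$; applying $-\ln$, which is continuous and strictly decreasing on $[0,\infty)$ with $-\ln 0=+\infty$, turns the infimum of $e^{-\gamma_p/p}$ into the supremum of $\gamma_p/p$ and shows that $\lim_{p\to\infty}\gamma_p/p$ exists in $[0,+\infty]$ and equals $\sup_{p\ge1}\gamma_p/p$. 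Equivalently, one may simply invoke Fekete's lemma in its super-additive form, which says that for any super-additive sequence $(a_p)$ with values in $(-\infty,+\infty]$ one has $\lim_p a_p/p=\sup_p a_p/p$; I would likely state the argument in the self-contained form above rather than cite Fekete.

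The only point requiring a little care—rather than a genuine obstacle—is the bookkeeping of the value $+\infty$, which arises exactly when some $\lambda_p=0$, i.e.\ for rank-one Kraus operators, where the trajectory purifies in finitely many steps. With the conventions $\ln 0=-\infty$ and $e^{-\infty}=0$ already in force in the paper, every inequality and limit used above remains valid verbatim, so the argument applies uniformly whether or not $\gamma_p$ is finite.
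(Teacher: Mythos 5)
Your proposal is correct and matches the paper's intent: the corollary is stated without proof precisely because it follows from Proposition~\ref{prop:lambdan} by applying $-\ln$ to the sub-multiplicativity bound and to the limit identity for $\lambda_p^{1/p}$, which is exactly what you do. Your careful handling of the degenerate case $\lambda_p=0$ (rank-one Kraus operators, $\gamma_p=+\infty$) is a sensible addition but not a departure from the paper's argument.
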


This limit characterizes, in an asymptotic sense, an upper bound on the exponential decay rate
associated with grouping the dynamics into blocks of increasing length.




\section{Exponential stability of estimated quantum trajectories}\label{sec:stability}
Closely related to the question of purification is the notion of asymptotic stability of estimated quantum trajectories (or quantum filters). The problem considered is as follows.
Suppose that the initial state of the system is unknown, but that we still have access to the measurement outcomes. In this case, we cannot construct the trajectory $(\rho_n)$ since the starting point is missing. However, we can construct an estimation of the trajectory by fixing an arbitrary initial state $\hat\rho_0$ and making it evolve as if it were the true state. More precisely, we define the estimated trajectory by

\beqeeq{
	\hat\rho_n := \normalized{V_{i_n}  \ldots V_{i_1} \hat\rho_0 V_{i_1}^\dag \ldots V_{i_n}^\dag}
}
when $(i_1,\ldots,i_n)$ has been observed. The natural question that arises is whether this estimated trajectory converges to the true trajectory as $n$ increases.

First, to ensure that the estimated trajectory is well defined at all times, one typically requires that the initial states satisfy $\supp(\rho_0) \subset \supp(\hat\rho_0)$\,---\,we write this as $\rho_0 \ll \hat\rho_0$\,---\,so that the denominator $\P_{\rho_0}$--a.s.\ never vanishes. Indeed when such hypothesis holds, there exists a constant $c>0$ such that $\rho_0 \leq c \hat\rho_0$ and then for all $(i_1,\ldots,i_n) \in \I^n,$
\beqeeq{
	\tr{V_{i_n} \ldots V_{i_1} \rho_0 V_{i_1}^\dag \ldots V_{i_n}^\dag} \leq c \tr{V_{i_n} \ldots V_{i_1} \hat\rho_0 V_{i_1}^\dag \ldots V_{i_n}^\dag}
}
which implies that
\beqeeq{
	\P_{\rho_0}\oo\left\{\tr{V_{i_n} \ldots V_{i_1} \hat\rho_0 V_{i_1}^\dag \ldots V_{i_n}^\dag} = 0 \right\} \cc= 0.
}
A convenient way to guarantee that $\rho_0 \ll \hat\rho_0$ for any initial state $\rho_0$ is to choose the initial estimate $\hat\rho_0$ to be full rank.\\

To quantify the proximity between the true state $\rho_n$ and its estimate $\hat\rho_n$, we use the quantum fidelity \cite{jozsa1994fidelity}, a standard measure of proximity between quantum states. For density operators \( \rho, \sigma \in \mathcal{S} \), it is defined as
\beqeeq{
  F(\rho, \sigma) = \bigl(\operatorname{Tr}\sqrt{\sqrt{\rho}\,\sigma\sqrt{\rho}}\bigr)^2.
}
It satisfies \( 0 \le F(\rho, \sigma) \le 1 \), with equality \( F(\rho, \sigma) = 1 \) if and only if \( \rho = \sigma \),  
and \( F(\rho, \sigma) = 0 \) whenever $\supp(\rho)$ and $\supp(\sigma)$ are orthogonal.  
In the particular case where one of the two states is pure, say \( \rho = \ketbra{\psi} \), the expression simplifies to
\beqeeq{
  F(\ketbra{\psi}, \sigma) = \bra{\psi}\sigma\ket{\psi}.
}
The next theorem establishes the exponential stability of the filter.

\begin{theorem}[Exponential stability]\label{thm:exponential_stability} Assume that $\rho_0 \ll \hat\rho_0$ and that \pur holds.
 Then the estimated trajectory converges exponentially fast to the true trajectory: for all $n\geq 0,$ \beqeeq{\E{1- F(\rho_n,\hat\rho_n)} \leq C e^{- \gamma \floor{n/\bar p}}}
where $C=\bigl\|\hat\rho_0^{-\frac12}\rho_0\hat\rho_0^{-\frac12}\bigr\|_{\infty} \mathcal V(\hat \rho_0)$ and $\gamma$ is the rate of purification of Theorem \ref{thm:exponential_purification}. 
\end{theorem}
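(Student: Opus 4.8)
The plan is to connect the fidelity deficit $1-F(\rho_n,\hat\rho_n)$ to the purity deficit $\mathcal V(\hat\rho_n)^2 = 1-\tr{\hat\rho_n^2}$ of the estimated trajectory, and then invoke Theorem~\ref{thm:exponential_purification} applied to $\hat\rho_0$. The key observation is that $\rho_0 \ll \hat\rho_0$ gives, with $C_0 := \|\hat\rho_0^{-1/2}\rho_0\hat\rho_0^{-1/2}\|_\infty$, the operator inequality $\rho_0 \le C_0\,\hat\rho_0$; and this ordering is \emph{preserved} (up to renormalization) along the common measurement record, since applying the same Kraus operator $V_I$ to both sides and dividing by the corresponding (positive) normalization factors keeps the inequality intact in the sense that, pathwise, $\rho_n \le c_n \hat\rho_n$ for a suitable random constant $c_n$. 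The natural quantity to track is therefore $\tr{\hat\rho_n^{-1/2}\rho_n\hat\rho_n^{-1/2}}$-type expressions, but in fact the cleanest route is to bound $1-F(\rho_n,\hat\rho_n)$ directly in terms of $C_0$ and $1-\tr{\hat\rho_n^2}$.

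First I would establish the pointwise fidelity bound: for any two states $\rho,\sigma$ with $\rho \le c\,\sigma$, one has
\beqeeq{
 1 - F(\rho,\sigma) \;\le\; c\,\bigl(1-\tr{\sigma^2}\bigr)^{1/2}\quad\text{or a similar clean bound},
}
ideally $1-F(\rho,\sigma)\le c\,\mathcal V(\sigma)$. A convenient way: write $F(\rho,\sigma)\ge \tr{\rho\sigma}$ is false in general, so instead use that $\rho \le c\sigma$ implies $\sqrt{\sigma}\rho\sqrt{\sigma}\le c\,\sigma^2$ and then $\operatorname{Tr}\sqrt{\sqrt\sigma\rho\sqrt\sigma}$ can be bounded below; alternatively one can use the known inequality $1-F(\rho,\sigma)\le \tfrac12\|\rho-\sigma\|_1$ combined with a bound on $\|\rho-\sigma\|_1$ coming from $\rho\le c\sigma$ and the purity of $\sigma$. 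When $\sigma$ is close to pure, say $\sigma = \sum_k q_k\ketbra{\phi_k}$ with $q_1\to1$, the constraint $\rho\le c\sigma$ forces $\rho$ to be close to $\ketbra{\phi_1}$ as well, and the fidelity deficit is controlled by $1-q_1\le 1-q_1^2 = \mathcal V(\sigma)^2 \le \mathcal V(\sigma)$ up to the factor $c$. I would make this precise by an explicit spectral computation in the eigenbasis of $\sigma$.

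Second, I would track the constant along the trajectory. Since $\rho_0\le C_0\hat\rho_0$, applying $V_I(\cdot)V_I^\dagger$ gives $V_I\rho_0 V_I^\dagger \le C_0\, V_I\hat\rho_0 V_I^\dagger$; dividing both sides by $\tr{V_I\hat\rho_0 V_I^\dagger}$ yields
\beqeeq{
 \frac{V_I\rho_0 V_I^\dagger}{\tr{V_I\hat\rho_0 V_I^\dagger}} \;\le\; C_0\,\hat\rho_n,
}
and since $\tr{V_I\rho_0 V_I^\dagger}\le \tr{V_I\hat\rho_0 V_I^\dagger}$ (as $\rho_0\le\hat\rho_0$ up to $C_0$, but here only need the renormalization ratio $\le C_0$), one gets $\rho_n \le C_0\,\hat\rho_n$ pathwise for every $n$, for the \emph{true} (unnormalized-ratio-adjusted) record. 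Combining with the pointwise bound gives $1-F(\rho_n,\hat\rho_n)\le C_0\,\mathcal V(\hat\rho_n)$ almost surely. Taking expectations and applying Theorem~\ref{thm:exponential_purification} to the trajectory started at $\hat\rho_0$ (which is legitimate: $\hat\rho_n$ is itself a genuine quantum trajectory under $\mathbb P_{\hat\rho_0}$, but we need it under $\mathbb P_{\rho_0}$ — this is where absolute continuity $\rho_0\ll\hat\rho_0$ re-enters, via a change of measure / Radon–Nikodym argument, or more directly by noting the supermartingale-type bound $\EE{\mathcal V(\hat\rho_{n+\bar p})}{\mathcal F_n}\le\lambda\,\mathcal V(\hat\rho_n)$ holds pathwise and hence under any measure) yields
\beqeeq{
 \E{1-F(\rho_n,\hat\rho_n)} \;\le\; C_0\,\E[\mathbb P_{\rho_0}]{\mathcal V(\hat\rho_n)} \;\le\; C_0\,\mathcal V(\hat\rho_0)\,e^{-\gamma\lfloor n/\bar p\rfloor} = C\,e^{-\gamma\lfloor n/\bar p\rfloor}.
}

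The main obstacle I anticipate is precisely the measure-change subtlety: Theorem~\ref{thm:exponential_purification} bounds $\E[\mathbb P_{\hat\rho_0}]{\mathcal V(\hat\rho_n)}$, whereas here $\hat\rho_n$ is driven by the record generated under $\mathbb P_{\rho_0}$. The resolution is that the one-step contraction $\EE{\mathcal V(\hat\rho_{n+\bar p})}{\rho_n,\hat\rho_n}\le\lambda\,\mathcal V(\hat\rho_n)$ is a deterministic inequality valid for every fixed value of $\hat\rho_n$ and is therefore independent of which initial state generates the outcomes — one just has to check that the relevant outcome probabilities are mutually absolutely continuous (guaranteed by $\rho_0\ll\hat\rho_0$, so $\mathbb P_{\rho_0}\ll\mathbb P_{\hat\rho_0}$), so that a.s.-statements transfer and the iterated expectation argument goes through under $\mathbb P_{\rho_0}$. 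A secondary technical point is getting the cleanest form of the pointwise fidelity bound $1-F(\rho,\sigma)\le c\,\mathcal V(\sigma)$ when $\rho\le c\sigma$; I would prove it by diagonalizing $\sigma$ and using concavity of $\sqrt{\cdot}$ together with $\sqrt{\sigma}\rho\sqrt{\sigma}\le c\,\sigma^2$.
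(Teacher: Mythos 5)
Your plan is right in spirit---reduce the fidelity deficit to the purity deficit of the estimated trajectory and invoke Theorem~\ref{thm:exponential_purification} for $\hat\rho_0$---but two of the steps you rely on are wrong as stated. First, the pathwise bound $\rho_n\le C_0\,\hat\rho_n$ with $C_0:=\bigl\|\hat\rho_0^{-1/2}\rho_0\hat\rho_0^{-1/2}\bigr\|_\infty$ fails. From $\rho_0\le C_0\hat\rho_0$ you do get $V_I\rho_0V_I^\dag\le C_0\,V_I\hat\rho_0V_I^\dag$, but after dividing each side by its own trace the constant becomes $C_0\,\P_{\hat\rho_0}(I)/\P_{\rho_0}(I)$, and $\rho_0\ll\hat\rho_0$ bounds this likelihood ratio only in the \emph{other} direction ($\P_{\rho_0}(I)\le C_0\,\P_{\hat\rho_0}(I)$); on records that are unlikely under $\rho_0$ it is unbounded. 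Your supporting claim $\tr{V_I\rho_0V_I^\dag}\le\tr{V_I\hat\rho_0V_I^\dag}$ is not implied by $\rho_0\le C_0\hat\rho_0$, and even if it held it is the reverse of the inequality you would need. Second, the assertion that $\EE{\mathcal V(\hat\rho_{n+\bar p})}{\mathcal F_n}\le\lambda\,\mathcal V(\hat\rho_n)$ ``holds pathwise and hence under any measure'' is false: a conditional expectation depends on the law of the outcomes, and under $\P_{\rho_0}$ these are drawn with probabilities $\tr{V_i\rho_nV_i^\dag}$, not $\tr{V_i\hat\rho_nV_i^\dag}$, so $(\mathcal V(\hat\rho_n))_n$ need not contract under $\P_{\rho_0}$ and your inequality $\E[\rho_0]{\mathcal V(\hat\rho_n)}\le\mathcal V(\hat\rho_0)e^{-\gamma\floor{n/\bar p}}$ is unjustified (a correct change of measure costs an extra factor $C_0$). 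A further slip: $F(\rho,\sigma)\ge\tr{\rho\sigma}$ is \emph{true} for all states---it is exactly the inequality the paper uses---so the substitute fidelity bound you were hunting for is unnecessary; and when $\rho\le c\,\sigma$ the elementary estimate $1-\tr{\rho\sigma}=\tr{\rho(\mathbb I-\sigma)}\le c\,\tr{\sigma(\mathbb I-\sigma)}=c\,(1-\tr{\sigma^2})$ is the clean pointwise bound you wanted.

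The encouraging news is that your two errors offset each other once tracked honestly, and the repaired argument is essentially the paper's proof: the unbounded likelihood ratio that spoils the pathwise domination is precisely the Radon--Nikodym density that converts the $\P_{\rho_0}$-expectation into a $\P_{\hat\rho_0}$-expectation, where Theorem~\ref{thm:exponential_purification} does apply. Concretely, using $1-F(\rho_n,\hat\rho_n)\le 1-\tr{\rho_n\hat\rho_n}$ and $\rho_0\le C_0\hat\rho_0$, one gets $\E[\rho_0]{1-\tr{\rho_n\hat\rho_n}}=\sum_I\tr{V_I\rho_0V_I^\dag\,(\mathbb I-\hat\rho_n(I))}\le C_0\sum_I\tr{V_I\hat\rho_0V_I^\dag\,(\mathbb I-\hat\rho_n(I))}=C_0\,\E[\hat\rho_0]{1-\tr{\hat\rho_n^2}}\le C_0\,\E[\hat\rho_0]{\mathcal V(\hat\rho_n)}\le C_0\,\mathcal V(\hat\rho_0)\,e^{-\gamma\floor{n/\bar p}}$, which is exactly the stated bound with $C=C_0\,\mathcal V(\hat\rho_0)$. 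The paper implements this same change of measure by observing that $\E[\rho_0^{(a)}]{1-\tr{\rho_n^{(a)}\hat\rho_n}}$ is linear in $\rho_0^{(a)}$ (the normalizations cancel against the outcome probabilities), decomposing $\hat\rho_0=\mu\,\rho_0+(1-\mu)\,\rho_0^c$ with $\mu^{-1}=C_0$, discarding the nonnegative complementary term, and only then using $F\ge\tr{\rho\sigma}$. Until you make this measure-change step explicit, your proof has a genuine gap.
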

Here $\|\cdot\|_\infty$ denotes the operator norm (largest singular value), and $\hat\rho_0^{-1/2}$ is the inverse of $\hat\rho_0^{1/2}$ on $\supp(\hat\rho_0)$,
extended by zero elsewhere.\\

{
As in the previous section, we also obtain the following tail bound.
\begin{corollary}[Exponential fidelity tail bound]
Under the assumptions of Theorem~\ref{thm:exponential_stability}, for all $\varepsilon>0$ and all $n\ge 0$,
\beqeeq{
\mathbb P\big(1 - F(\rho_n,\hat\rho_n) \ge \varepsilon\big)
\le
\frac{C}{\varepsilon}\, e^{-\gamma\lfloor n/\bar p\rfloor}.
}
\end{corollary}

\begin{proof}
The proof is the same as that of Corollary~\ref{cor:exponential_tail_bound_purity}, replacing Theorem~\ref{thm:exponential_purification} with Theorem~\ref{thm:exponential_stability}.
\end{proof}
}

Let us now proceed with the proof of the theorem.

\begin{proof}[Proof of Theorem \ref{thm:exponential_stability}]
We start by recalling that, for a given initial state $\rho$, the probability of observing a sequence $I\in\I^n$ of measurement outcomes is 
\beqeeq{
\P_\rho(I)=\tr{V_{I}\rho V_{I}^\dag}.
}
We denote by $\E[\rho]{\cdot}$ the expectation with respect to this probability measure.  
For two different initial states $\rho_0^{(a)}$ and $\rho_0^{(b)}$, the corresponding quantum trajectories are
\beqeeq{
\rho_n^{(a)}(I)
    := \frac{V_{I}\,\rho_0^{(a)}\,V_{I}^\dag}{\tr{V_{I}\rho_0^{(a)}V_{I}^\dag}},
    \qquad
\rho_n^{(b)}(I)
    := \frac{V_{I}\,\rho_0^{(b)}\,V_{I}^\dag}{\tr{V_{I}\rho_0^{(b)}V_{I}^\dag}}.
}
A straightforward computation gives
\beqeeq{
\E[\rho_0^{(a)}]{1-\tr{\rho_n^{(a)}\rho_n^{(b)}}}
=1-\sum_{I\in\I^n}\tr*{V_{I}\rho_0^{(a)}V_{I}^\dag
\frac{V_{I}\rho_0^{(b)}V_{I}^\dag}{\tr{V_{I}\rho_0^{(b)}V_{I}^\dag}}},
}
which shows in particular that this expectation is linear in the argument $\rho_0^{(a)}$. 

Since $\supp(\rho_0) \subset \supp (\hat\rho_0)$, we can decompose $\hat\rho_0$ as
\beqeeq{
\hat\rho_0 = \mu\, \rho_0 + (1-\mu)\, \rho_0^c,
}
for some complementary state $\rho_0^c \in \mathcal S$ and a constant $\mu>0,$ given by $\mu = \bigl\|\hat\rho_0^{-1/2}\,\rho_0\,\hat\rho_0^{-1/2}\bigr\|_{\infty}^{-1}.$

Let us now choose $\rho_0^{(a)} = \hat \rho_0$ and expand by linearity. We obtain
\beqeeq{
	\E[\hat\rho_0]{1 - \tr{\hat\rho_n \rho_n^{(b)}} } = \mu \E[\rho_0]{1 - \tr{\rho_n \rho_n^{(b)}}}   + (1-\mu) \E[\rho_0^c]{1 - \tr{\rho_n^c \rho_n^{(b)}}   }.
}
Let us set $\rho_0^{(b)} = \hat\rho_0$  as well. It follows
\begin{equation}\label{eq:decomposition_with_mu}
    \E[\hat\rho_0]{1 - \tr{\hat\rho_n^2} } = \mu \E[\rho_0]{1 - \tr{\rho_n \hat\rho_n}}   + (1-\mu) \E[\rho_0^c]{1 - \tr{\rho_n^c \hat\rho_n}   }.
\end{equation}
Since for any state $\rho,$ we have   $0 \leq \tr{\rho^2} \leq 1,$ it follows that $1 - \tr{\rho^2} \leq \sqrt{1 - \tr{\rho^2}}$ and then
\beqeeq{
	 \E[\hat\rho_0]{1 - \tr{\hat\rho_n^2}}   \leq    \E[\hat\rho_0]{ \sqrt{1-\tr{\hat\rho_n^2}}}.
}
As the expectation is taken with respect to $\P_{\hat\rho_0}$, Theorem~\ref{thm:exponential_purification} applies directly to the right-hand side, yielding
\beqeeq{
	\E[\hat\rho_0]{1 - \tr{\hat\rho_n^2}}   \leq V(\hat\rho_0) e^{-\gamma \floor{{n}/{\bar p}}}.
}
Using Equation~\eqref{eq:decomposition_with_mu} and the fact that 
$(1-\mu)\E[\rho_0^c]{1 - \tr{\rho_n^c \hat\rho_n}} \geq 0$, we obtain
\beqeeq{
    \E[\rho_0]{1 - \tr{\rho_n \hat\rho_n}} \leq \mu^{-1} V(\hat\rho_0)\, e^{-\gamma \lfloor n/\bar p \rfloor}.
}
Then using the well-known inequality $F(\rho,\sigma) \geq \tr{\rho \sigma}$ we get the theorem, with constant $ C = \mu^{-1} V(\hat\rho_0).$
\end{proof}

We now turn to the complementary situation in which
\pur does not hold.
In this case, the estimated trajectory may fail to converge to the true state. 

\begin{proposition}[Constancy of the fidelity on a dark subspace]
Let $D \subset \H$ be a dark subspace with projection $\pi_D$, in the sense that
for every finite record $I=(i_1,\dots,i_p)$,
\beqeeq{
    \pi_D\,V_I^\dagger V_I\,\pi_D = \mu_I\,\pi_D,
}
for some scalar $\mu_I.$
Assume that $\rho_0$ and $\hat\rho_0$ are both supported on $D,$ with $\rho_0 \ll \hat\rho_0.$
Then the fidelity is conserved: for all $n\geq 0,$
\beqeeq{
F(\rho_n,\hat\rho_n)=F(\rho_0,\hat\rho_0).
}
\end{proposition}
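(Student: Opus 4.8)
The plan is to adapt the argument used for Proposition~\ref{prop:purity_constant_in_dark_subspace}, exploiting the fact that on a dark subspace every Kraus word acts, up to a scalar, as a fixed isometry, combined with the unitary invariance of the fidelity.

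First I would condition on an arbitrary measurement record $I=(i_1,\dots,i_n)$ and write $V_I = V_{i_n}\cdots V_{i_1}$. The darkness hypothesis gives $\pi_D V_I^\dagger V_I \pi_D = \mu_I\,\pi_D$ with $\mu_I>0$, which says precisely that $V_I\pi_D/\sqrt{\mu_I}$ is a partial isometry with initial projection $\pi_D$; equivalently, there is an isometry $U_I$ from $D$ onto $V_I(D)$ with $V_I|_D = \sqrt{\mu_I}\,U_I$. Since $\dim V_I(D) = \dim D$, I can extend $U_I$ to a unitary $\tilde U_I$ on all of $\H$.

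Because $\rho_0$ and $\hat\rho_0$ are both supported on $D$, we then get $V_I \rho_0 V_I^\dagger = \mu_I\,\tilde U_I \rho_0 \tilde U_I^\dagger$, and taking the trace (using that $\tilde U_I$ is unitary) gives $\tr{V_I\rho_0 V_I^\dagger} = \mu_I > 0$; in particular the normalization never vanishes, so $\rho_n = \tilde U_I \rho_0 \tilde U_I^\dagger$ is well defined, and the same computation with $\hat\rho_0$ yields $\hat\rho_n = \tilde U_I \hat\rho_0 \tilde U_I^\dagger$. The hypothesis $\rho_0 \ll \hat\rho_0$ is preserved under the common unitary conjugation, so $\rho_n \ll \hat\rho_n$ as well, consistent with the estimated trajectory being well defined along the way.

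Finally, the fidelity is invariant under conjugation of both arguments by a common unitary, $F(U\rho U^\dagger, U\sigma U^\dagger) = F(\rho,\sigma)$, which follows directly from the definition together with $\sqrt{U\rho U^\dagger} = U\sqrt{\rho}\,U^\dagger$. Hence $F(\rho_n,\hat\rho_n) = F(\tilde U_I \rho_0 \tilde U_I^\dagger, \tilde U_I \hat\rho_0 \tilde U_I^\dagger) = F(\rho_0,\hat\rho_0)$. Since the record $I$ was arbitrary, the equality holds pathwise, hence for every $n\ge0$. I do not anticipate a genuine obstacle: the only point requiring mild care is the passage from the darkness relation to the isometry $U_I$ and its extension to a bona fide unitary on $\H$, exactly as in the proof of Proposition~\ref{prop:purity_constant_in_dark_subspace}; everything else reduces to unitary invariance of the fidelity.
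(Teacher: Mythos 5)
Your proposal is correct and follows essentially the same route as the paper: the darkness relation makes each Kraus word act on $D$ as $\sqrt{\mu_I}$ times an isometry, so both $\rho_n$ and $\hat\rho_n$ arise from a common unitary conjugation of $\rho_0$ and $\hat\rho_0$, and unitary invariance of the fidelity concludes. Your additional care in extending the isometry $U_I\colon D\to V_I(D)$ to a unitary on all of $\H$ (and in checking that the normalization $\tr{V_I\rho_0 V_I^\dagger}=\mu_I>0$ never vanishes) is a harmless refinement of the same argument the paper imports from Proposition~\ref{prop:purity_constant_in_dark_subspace}.
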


\begin{proof}
As observed in the proof of Proposition~\ref{prop:purity_constant_in_dark_subspace},
for any finite record $I$ there exists a unitary operator $U_I$ such that,
since both $\rho_0$ and $\hat\rho_0$ are supported on $D$,
\beqeeq{
\rho_n = U_I \rho_0 U_I^\dagger,
\qquad
\hat\rho_n = U_I \hat\rho_0 U_I^\dagger.
}
The result follows from the invariance of fidelity under simultaneous unitary conjugation.
\end{proof}

\section{Example}
\label{sec:simulations}

We illustrate our theoretical results with a simple example that combines Hamiltonian and measurement-induced dynamics. We consider a chain of four spins, each associated with a two-dimensional Hilbert space \(\mathbb{C}^2\). The global Hilbert space is given by
$
\mathcal{H} = (\mathbb{C}^2)^{\otimes 4}.
$
The system evolves under a Hamiltonian with Ising exchange interactions and a uniform external magnetic field $\vec B = (B_x, 0, B_z)$ whose direction lies in the $x$–$z$ plane. Projective measurements of the Pauli operator $\sigma_z$ are repeatedly performed on the last spin of the chain, see Fig.\ \ref{fig:spin_chain}.
\begin{figure}[htpb]
    \centering
    \hspace*{2.5cm}
    \begin{tikzpicture}[
        >=Latex,
        line cap=round,
        spin/.style={ball color=white, circle, draw=black, thick, minimum size=9mm, inner sep=0pt},
        link/.style={line width=1pt, draw=gray!70},
        Bvec/.style={->, line width=1.6pt, draw=blue!70!black},
        cavity/.style={draw=red!70!black, fill=red!5, thick},
        field/.style={fill=red!20, draw=none, opacity=0.25},
        Blabel/.style={blue!70!black, font=\normalsize},
        spinarrow/.style={->, line width=1pt, draw=black}
        ]
        \node[spin] (s2) at (2,0) {};
        \node[spin] (s3) at (4,0) {};
        \node[spin] (s4) at (6,0) {};
        \node[spin] (s5) at (8,0) {};
        \draw[link] (s2) -- (s3);
        \draw[link] (s3) -- (s4);
        \draw[link] (s4) -- (s5);
        \draw[spinarrow] ($(s2.center)+(-105:6.5mm)$) -- ($(s2.center)+(75:6.5mm)$);
        \draw[spinarrow] ($(s3.center)+(-120:6.5mm)$) -- ($(s3.center)+(60:6.5mm)$);
        \draw[spinarrow] ($(s4.center)+(-100:6.5mm)$) -- ($(s4.center)+(80:6.5mm)$);
        \draw[spinarrow] ($(s5.center)+(-90:6.5mm)$) -- ($(s5.center)+(90:6.5mm)$);
        \coordinate (Bstart) at (1,-1.5);
        \draw[Bvec] (Bstart) -- ++(1.5,1.0) node[above right, Blabel] {$\vec B$};
        \def\cavh{0.3}
        \def\cavw{1.0}
        \def\cavy{0.8}
        \draw[cavity] ($(s5)+(0,\cavy)$) rectangle ++(\cavw, \cavh);
        \draw[cavity] ($(s5)+(0,\cavy)$) rectangle ++(-\cavw, \cavh);
        \draw[cavity] ($(s5)+(0,-\cavy)$) rectangle ++(\cavw, -\cavh);
        \draw[cavity] ($(s5)+(0,-\cavy)$) rectangle ++(-\cavw, -\cavh);
        \shade[
            draw=none,
            inner color=red!35,
            outer color=red!5,
            opacity=0.35
            ]
        ($(s5)+(0,0.35)$) ellipse [x radius=1.1, y radius=0.45];
        \shade[
            draw=none,
            inner color=red!35,
            outer color=red!5,
            opacity=0.35
            ]
        ($(s5)+(0,-0.35)$) ellipse [x radius=1.1, y radius=0.45];
        \draw[thick]
        ($(s5)+(2.2,0.9)$) -- ($(s5)+(1.2,0.2)$);
        \node[right] at ($(s5)+(2.2,0.9)$)
        {\small measurement};
    \end{tikzpicture}
    \caption{
        Spin chain subject to a uniform magnetic field $\vec B$, with repeated projective measurements performed on the last spin along the $z$ axis.
        }
    \label{fig:spin_chain}
\end{figure}
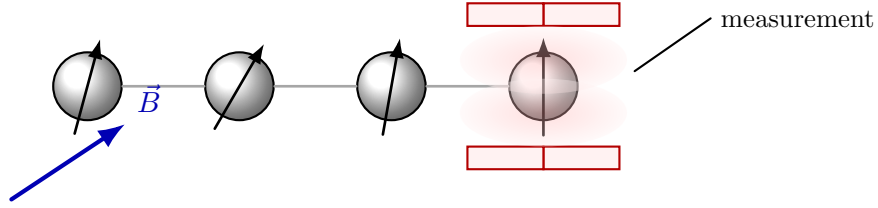

\noindent The Hamiltonian reads:
\beqeeq{
    H = -J \sum_{j=1}^{3} \sigma_j^z \sigma_{j+1}^z - B_x \sum_{j=1}^4 \sigma_j^x - B_z \sum_{j=1}^4 \sigma_j^z,
}
where \(J \in \mathbb{R}\) denotes the coupling strength, \(B_x, B_z\) are the components of the uniform magnetic field, and \(\sigma_j^\alpha\) denotes the Pauli matrix \(\sigma^\alpha\) acting on the \(j\)-th site:
\beqeeq{
    \sigma_j^\alpha = \mathbb I_{\C^2}^{\otimes (j-1)} \otimes \sigma^\alpha \otimes \mathbb I_{\C^2}^{\otimes (4-j)}, \quad \alpha \in \{x, z\}
}
with $\mathbb I_{\C^2}$ the identity operator on $\C^2.$
We let the system evolve according to this Hamiltonian for a duration $\tau,$ then perform a measurement on the last spin along the $z$-axis. The unitary operator corresponding to the Hamiltonian evolution is
$U = e^{-i \tau H}.$
The projections corresponding to the measurement of the last spin are
\beqeeq{
P_0 = \mathbb I_{\C^2}^{\otimes 3} \otimes \ket{0}\bra{0}, \qquad
P_1 = \mathbb I_{\C^2}^{\otimes 3} \otimes \ket{1}\bra{1}
}
which correspond respectively to measurement outcomes \(0\) and \(1\) in the eigenbasis of \(\sigma_z\).
The associated Kraus operators are 
\beqeeq{
V_0 =  P_0 U, \qquad \qquad
V_1 =  P_1 U.
}
Iterating this procedure of Hamiltonian evolution for a duration $\tau$ followed by a measurement on the last spin, the system evolves according to a quantum trajectory dynamics given by
\beqeeq{
	\rho_{n+1} = \normalized{V_{i} \rho_n V_i^\dag} \qquad \text{with probability } \tr{V_i \rho_n V_i^\dag},~~i=0,1 
}
where, as before, the index $n$ labels the measurement times.

We perform numerical simulations of this model, choosing parameters $J=1,$ $\tau=1,$ $B_x = 1,$  $B_z=1,$ and setting the initial state to be the completely mixed state $\rho_0 = \mathbb I_{\C^2}^{\otimes 4}/2^4.$ 
In Fig.~\ref{fig:purity}(a) we show the normalized quantity
\beqeeq{
    \widetilde{\mathcal V}(\rho_n) := {\mathcal V(\rho_n)}/{\mathcal V(\rho_0)}
}
for individual sample trajectories (thin light-blue curves), together with their empirical mean (thick red curve),
{which corresponds to the quantity
\begin{equation}
\frac{1}{N}\sum_{k=1}^{N}\V(\rho_n^{(k)})
\approx
\E{\V(\rho_n)},
\qquad N=300,
\end{equation}
where $\rho_n^{(k)}$ denotes the $k$-th simulated trajectory.} There is a clear trend towards purification over time. Figure~\ref{fig:purity}(b) shows the same data on a log-linear scale to emphasize the exponential convergence. It also shows (dashed-blue lines) the upper bounds from Equation~\eqref{eq:exp_bound_with_arbitrary_p} for different values of the time interval $p$, demonstrating the improvement {of the bound} as $p$ increases.

\begin{figure}[htpb]
    \centering
    \begin{minipage}[t]{0.48\textwidth}
        \vspace{0pt}
        \centering
        \makebox[\linewidth]{\textbf{(a) Impurity as a function of time}}\\[0.3em]
        \includegraphics[width=\linewidth]{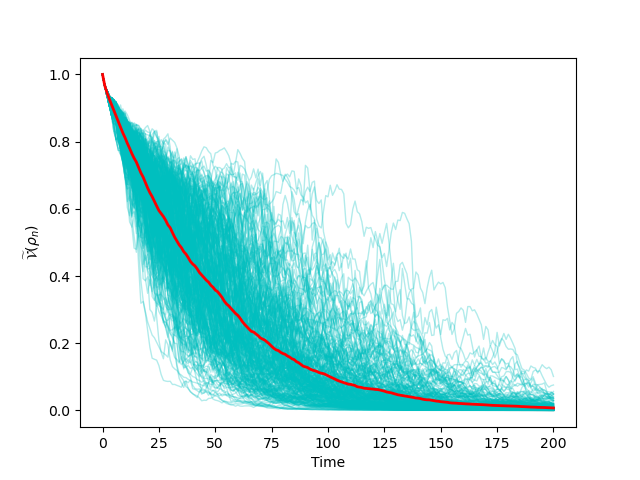}
    \end{minipage}
    \hfill
    \begin{minipage}[t]{0.48\textwidth}
        \vspace{0pt}
        \centering
        \makebox[\linewidth]{\textbf{(b) Log-impurity as a function of time}}\\[0.3em]
        \includegraphics[width=\linewidth]{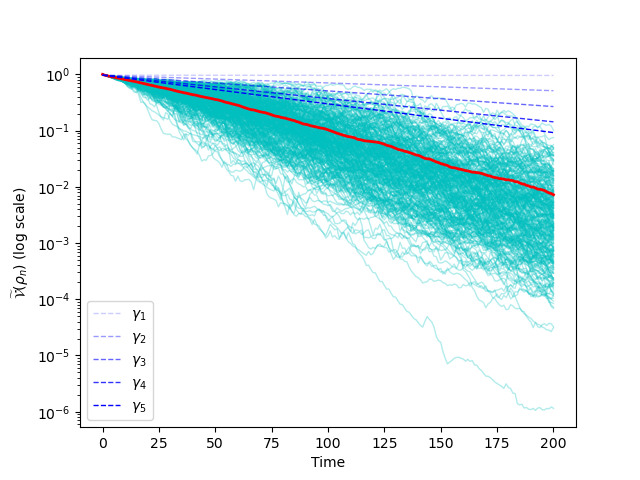}
    \end{minipage}
    \caption{
        (a) 300 trajectories of $\widetilde{\mathcal V}(\rho_n)= {\mathcal V(\rho_n)}/{\mathcal V(\rho_0)}$ (light blue) and the corresponding exponential decrease of the empirical average (thick red).
        (b) Logarithmic-scale plot of 300 trajectories of $\widetilde{\mathcal V}(\rho_n)$ (light blue) together with the empirical average (thick red) showing the expected linear behavior. Theoretical upper bounds $e^{-\gamma_p \floor{n/p}}$ stated in Equation \eqref{eq:exp_bound_with_arbitrary_p} are plotted in blue dashed line for different values of $p=1,\dots , 5$.}
    \label{fig:purity}
\end{figure}

In this model, one-step darkness holds, for instance for the subspaces $\range(U^\dagger P_0 U)$ and $\range(U^\dagger P_1 U)$, meaning that $\gamma_1 = 0$. However, the numerics show that the darkness property is lost from $p=2$ onward, and $\gamma_2 > 0$. { Moreover, we find $\pbar = 7$ (and $\dim \mathcal O = 128$), showing that the darkness property can be broken strictly before $\pbar$. These values hold for all parameters in the considered range.}

In Fig.~\ref{fig:gamma2-jtau}(a), we show the quantity $\gamma_2(J,\tau)/(2\tau),$  {where $\gamma_2(J,\tau)$ is computed as defined in Equation \eqref{eq:def_gamma_p},
as a function of the parameters $J$ and $\tau$, for fixed $B_x=1$ and $B_z=1$. We divide by $\tau$ in order to express the convergence rate in terms of physical time
rather than number of steps.} In Fig.~\ref{fig:gamma2-jtau}(b), we show in a similar manner the empirical rate $\gamma_{\rm emp}(J,\tau)/\tau$, obtained from the effective decay rate of the red curve in Fig.~\ref{fig:purity}(b) {(computed by averaging over 300 simulated trajectories with the corresponding parameters)}. {The numerical results indicate that the theoretical bound and the empirical rate display similar qualitative behavior over this range of parameters (for example, the maximum is reached approximately in the same region).}
Figure~\ref{fig:gamma2-bb} shows a similar dependence but with respect to parameters $B_x,B_z$, for fixed $J=1,~\tau=1$.

\begin{figure}[htpb]
    \centering
    \begin{minipage}{0.48\textwidth}
        \centering
        \makebox[\linewidth]{\textbf{(a) Theoretical bound $\gamma_2/(2\tau)$}}\\[0.3em]
        \includegraphics[width=\linewidth]{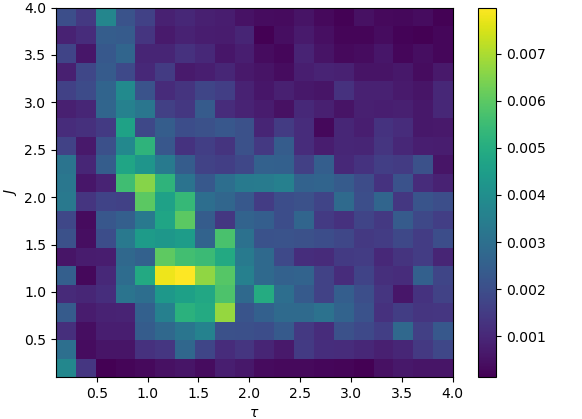}
    \end{minipage}
    \hfill
    \begin{minipage}{0.48\textwidth}
        \centering
        \makebox[\linewidth]{\textbf{(b) Empirical rate $\gamma_\emp/\tau$ }}\\[0.3em]
        \includegraphics[width=\linewidth]{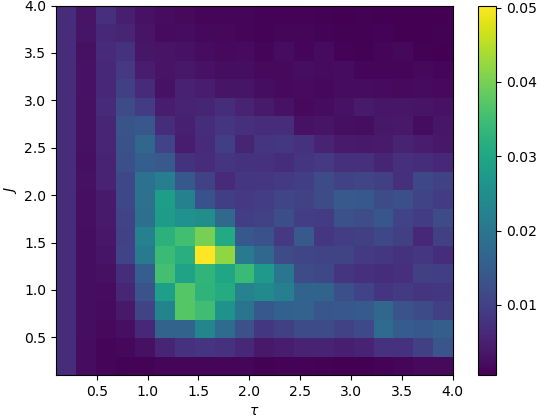}
    \end{minipage}
    \caption{
        (a) The rate bound $\gamma_{\textrm{2}}/(2\tau)$ as a function of $J$ and $\tau$, for $B_x=1$ and $B_z=1$. 
        (b) The empirical rate $\gamma_{\textrm{emp}}/\tau$ as a function of $J$ and $\tau$, for $B_x=1$ and $B_z=1$.}
    \label{fig:gamma2-jtau}
\end{figure}

\begin{figure}[t]
    \centering
    \begin{minipage}{0.48\textwidth}
        \centering
        \makebox[\linewidth]{\textbf{(a) Theoretical bound $\gamma_2/2$}}\\[0.3em]
        \includegraphics[width=\linewidth]{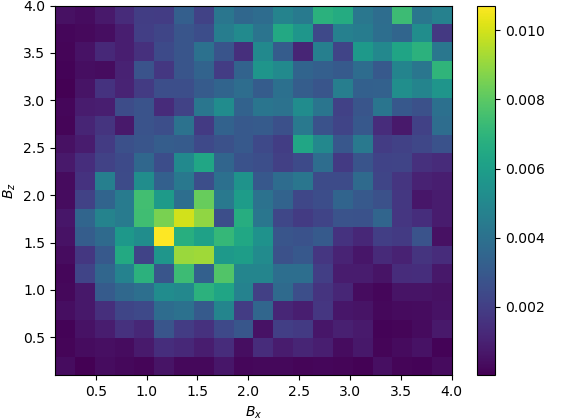}
    \end{minipage}
    \hfill
    \begin{minipage}{0.48\textwidth}
        \centering
        \makebox[\linewidth]{\textbf{(b) Empirical rate $\gamma_\emp$}}\\[0.3em]
        \includegraphics[width=\linewidth]{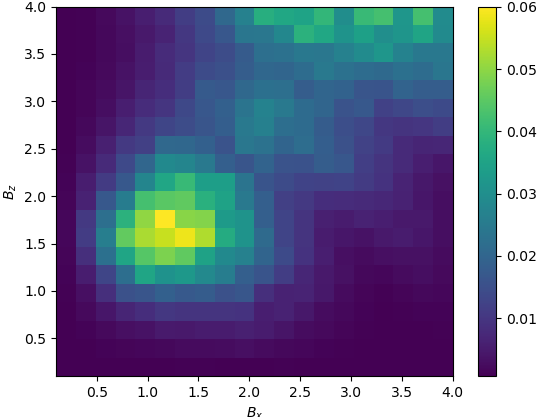}
    \end{minipage}
    \caption{
        (a) The rate bound $\gamma_{\textrm{2}}/2$ as a function of $(B_x,B_z)$, for $J=1$ and $\tau=1$.
        (b) The empirical rate $\gamma_{\textrm{emp}}$ as a function of $(B_x,B_z)$, for $J=1$ and $\tau=1$.
        }
    \label{fig:gamma2-bb}
\end{figure}

\section{Conclusions}\label{sec:conc}

{

 In this paper, we study the behavior of the purity of discrete-time quantum trajectories under perfect measurements, assuming the absence of dark subspaces. An alternative proof of the Kümmerer--Maassen purification theorem is first provided based on Lyapunov techniques, yielding a direct argument for almost sure purification via a simple impurity-based Lyapunov function.

 This asymptotic result is then strengthened by showing that purification occurs exponentially fast in expectation, providing a quantitative description of the speed of convergence towards pure states. The analysis further shows that this exponential behavior is driven by a finite-step contraction mechanism, and that it suffices to detect the absence of dark subspaces up to a finite length.

 The state estimation problem is next considered, and exponential stability of the associated quantum filter is established. When the true trajectory and an estimated trajectory are driven by the same measurement record, the estimation error decreases exponentially fast in expectation. This provides a quantitative version of filter stability and shows how purification translates into a progressive loss of dependence on the initial estimate.

 Finally, the theory is illustrated using a monitored spin-chain model combining Hamiltonian evolution and repeated local measurements. The numerical results are consistent with the theoretical predictions, showing clear exponential decay of the impurity and improved bounds when longer blocks of observations are considered.

}

A natural continuation of this work would be to extend this analysis to continuous-time quantum trajectories and to infinite dimensional systems. In the latter setting, non-purification of quantum trajectories is not necessarily equivalent to the presence of dark subspaces, as recently emphasized in~\cite{girotti2025purification}.

Finally, in the context of measurement-induced phase transitions in random hybrid circuits (see~\cite{fisher2023random} for a review), it would be interesting to investigate how the purification rate scales with the system size and the measurement frequency, and to explore potential
connections with the dynamical purification phase transition of Ref.~\cite{gullans2020dynamical}.

\vspace{4mm}

\noindent
{\bf Acknowledgments:} This work was supported by the Engineering and Physical Sciences Research Council, grants no. EP/T022140/1 and EP/V031201/1. N. A. acknowledges support from ANR-19-CE48-0003 and ANR-21-CE47-0015.

\bibliographystyle{iopart-num}
\bibliography{bibliography-01012026}

\appendix
\renewcommand{\thetheorem}{A.\arabic{theorem}}
\setcounter{theorem}{0}
\section*{Appendix}

The following theorem is a slight adaptation of \cite[Theorem~1, Chapter~8]{kushner1971introduction} (see also \cite[Theorem A.1]{amini2011design}).

\begin{theorem}\label{thm:accumulation_points_are_zeros_of_the_increment}
Let \( X_k \) be a  Markov chain in the compact state space \( S \). Suppose that there exists a nonnegative function \( \mathcal V(X) \) satisfying, for some integer $p\geq 1,$
\begin{equation}\label{eq:inc}
\mathbb{E} \left[ \mathcal V(X_{k+p}) \mid X_k \right] - \mathcal V(X_k) \leq \mathcal Q(X_k),
\end{equation}
where \( \mathcal Q(X) \) is a nonpositive continuous function of \( X \). Then the \( \omega \)-limit set \( \Omega \) of \( X_k \) (i.e. the set of accumulation points of the sample path) is almost surely included in the following set:
\beqeeq{
I := \{ X\in S \mid \mathcal Q(X) = 0 \}.
}
\end{theorem}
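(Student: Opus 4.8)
The plan is to convert the one-step hypothesis~\eqref{eq:inc} into an almost-sure statement on the long-run behaviour of $\mathcal Q(X_k)$ by a telescoping argument, and then to transfer that statement to accumulation points using the continuity of $\mathcal Q$. The only feature distinguishing this from the classical case is that~\eqref{eq:inc} controls the $p$-step increment rather than the one-step one, so the telescoping must be performed separately along each of the $p$ arithmetic subsequences indexed by a fixed residue modulo $p$.

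First I would fix $r\in\{0,\dots,p-1\}$ and look at the times $k=r,\,r+p,\,r+2p,\dots$. By the Markov property, conditioning on $X_k$ in~\eqref{eq:inc} coincides with conditioning on the natural filtration $\mathcal F_k$, so taking (unconditional) expectations, using the tower property, and telescoping over $m=0,\dots,M-1$ gives
\[
\mathbb E\bigl[\mathcal V(X_{r+Mp})\bigr]-\mathbb E\bigl[\mathcal V(X_r)\bigr]\;\le\;\sum_{m=0}^{M-1}\mathbb E\bigl[\mathcal Q(X_{r+mp})\bigr].
\]
Since $\mathcal V\ge 0$ and is bounded on the compact space $S$, the left-hand side is bounded below uniformly in $M$, so letting $M\to\infty$ yields $\sum_{m\ge 0}\mathbb E\bigl[-\mathcal Q(X_{r+mp})\bigr]\le\mathbb E\bigl[\mathcal V(X_r)\bigr]<\infty$. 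Because $-\mathcal Q\ge 0$, Tonelli's theorem lets us exchange sum and expectation, so $\sum_{m\ge 0}\bigl(-\mathcal Q(X_{r+mp})\bigr)<\infty$ almost surely; in particular $\mathcal Q(X_{r+mp})\to 0$ almost surely as $m\to\infty$. Discarding the finitely many null sets produced for $r=0,\dots,p-1$, we obtain $\mathcal Q(X_k)\to 0$ almost surely along the full sequence.

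It then remains to argue on the probability-one event on which $\mathcal Q(X_k)\to 0$. Let $X^\ast\in\Omega$, i.e.\ $X_{k_j}\to X^\ast$ for some subsequence $k_j\to\infty$. Continuity of $\mathcal Q$ gives $\mathcal Q(X^\ast)=\lim_{j}\mathcal Q(X_{k_j})=0$, hence $X^\ast\in I$. As $X^\ast$ was an arbitrary accumulation point, $\Omega\subset I$ on this event, which is the claim.

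I do not anticipate a genuine obstacle: the argument is routine once one sees that the telescoping must be done residue class by residue class. The two minor points to handle with some care are (i) the boundedness (or at least integrability) of $\mathcal V$, which makes the telescoped series finite and is automatic here since $\mathcal V$ is continuous on the compact set $S$; and (ii) the bookkeeping of the finitely many exceptional null sets, one per residue class, which combine harmlessly.
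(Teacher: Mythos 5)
Your proof is correct, but it takes a genuinely different route from the paper's. The paper reduces to the classical $p=1$ case by observing that each subsampled process $M^r_n := X_{np+r}$ is itself a Markov chain satisfying the one-step Lyapunov inequality, invokes Kushner's invariance theorem for each of them, and then devotes the bulk of the argument to proving the set identity $\Omega(X)=\bigcup_{r=0}^{p-1}\Omega(M^r)$ so as to transfer the conclusion back to the full sequence. You instead give a self-contained argument: telescoping the expected inequality along each residue class yields $\sum_{m\ge 0}\mathbb E\bigl[-\mathcal Q(X_{r+mp})\bigr]\le\mathbb E\bigl[\mathcal V(X_r)\bigr]<\infty$, whence by Tonelli $\mathcal Q(X_{r+mp})\to 0$ almost surely for each $r$, hence $\mathcal Q(X_k)\to 0$ along the whole sequence, and continuity of $\mathcal Q$ finishes the job. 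Both proofs hinge on the same residue-class decomposition; what yours buys is independence from the cited black box and a slightly stronger intermediate statement ($\mathcal Q(X_k)\to 0$ a.s.\ along the entire path, not merely on accumulation points), while the paper's buys brevity by outsourcing the analytic core to Kushner. The one hypothesis you use that is not literally in the statement is the integrability (e.g.\ boundedness) of $\mathcal V$, needed to make $\mathbb E[\mathcal V(X_r)]$ finite; you flag this correctly, it holds in the paper's application where $\mathcal V$ is continuous on the compact state space, and it is equally implicit in the paper's appeal to Kushner's theorem, so it is not a genuine gap.
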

\begin{proof}
For the increment $p=1,$ this is a well-known result (see, e.g., \cite{kushner1971introduction}). Here we show that the same conclusion holds for any $p\geq 1.$ For each $0\leq r\leq p-1,$ define the subsequence $M_n^r:=X_{np+r}.$ It can be easily checked that each $M_n^r$ is also a Markov chain. Now take $k=np+r,$ then Equation \eqref{eq:inc} can be rewritten as 
\begin{equation*}
\mathbb{E} \left[ \mathcal V(M_{n+1}^r) \mid M_n^r \right] - \mathcal V(M_n^r) \leq \mathcal Q(M_n^r).
\end{equation*}
Now we can apply the result in \cite{kushner1971introduction} to conclude that $\Omega(M^r)\subset I$ for each $r.$ To complete the argument, we now show that the $\omega$-limit set of $X$ satisfies
\beqeeq{
\Omega(X)=\bigcup_{r=0}^{p-1}\Omega(M^{r}).
}
We establish this identity by proving the two inclusions separately.

Let $x \in \Omega(X)$. Then there exists an increasing sequence of indices 
$k_n \to \infty$ such that $X_{k_n} \to x$.  
Write each $k_n$ in the form
\beqeeq{
k_n = p\, m_n + r_n, \qquad r_n \in \{0,\dots,p-1\}.
}
Since there are only finitely many possible residues, one value 
$r \in \{0,\dots,p-1\}$ occurs infinitely often in the sequence $(r_n)$.  
Restricting to this infinite subsequence, we obtain
\beqeeq{
X_{k_n} = X_{p\, m_n + r} = M^{r}_{m_n} \longrightarrow x.
}
Hence $x \in \Omega\!\oo M^{r}\cc$, and therefore
$
\Omega(X) \subseteq \bigcup_{r=0}^{p-1} \Omega\!\oo M^{r}\cc.
$

Conversely, if $x \in \Omega\!\oo M^{r}\cc$ for some $r$, then there exists 
$m_n \to \infty$ such that $M^{r}_{m_n} \to x$.  
Since $M^{r}_{m_n} = X_{p\, m_n + r}$, it follows that $x$ is also an accumulation point 
of the full sequence $(X_k)$. Thus,
$
\bigcup_{r=0}^{p-1} \Omega\!\oo M^{r}\cc \subseteq \Omega(X).
$
Combining the two inclusions gives the desired equality.
Hence,  $\Omega\subset\bigcup_{r=0}^{p-1} I=I,$ which finishes the proof.
\end{proof}

\end{document}